\newcommand{\ra}{\rightarrow}
\newcommand{\bra}{\langle} 
\newcommand{\ket}{\rangle}
\newcommand{\E}{{\mathbb E}}
\newcommand{\D}{{\mathbb D}}
\newcommand{\be}{\begin{equation}}
\newcommand{\ee}{\end{equation}}
\newcommand{\bea}{\begin{eqnarray}}
\newcommand{\eea}{\end{eqnarray}}
\newcommand{\ep}{\hfill  {\vrule height 10pt width 8pt depth 0pt}}
\newcommand{\grintl}{[\kern-.18em [}
\newcommand{\grintr}{]\kern-.18em ]}
\newcounter{resultcounter}[section]
\newtheorem{thm}[resultcounter]{Theorem}
\newtheorem{lem}[resultcounter]{Lemma}
\newtheorem{prop}[resultcounter]{Proposition}
\newtheorem{definition}[resultcounter]{Definition}
\newtheorem{rem}[resultcounter]{Remark}
\newtheorem{rems}[resultcounter]{Remarks}
  \def\cC{{\cal C}}
 \def\cK{{\cal K}} \def\cL{{\cal L}}
 \def\cT{{\cal T}}
\newcommand{\R}{{\mathbb R}}
\newcommand{\N}{{\mathbb N}}
\newcommand{\C}{{\mathbb C}}
\newcommand{\G}{{\mathbb G}}
\newcommand{\Z}{{\mathbb Z}}
\renewcommand{\E}{{\mathbb E}}
\renewcommand{\P}{{\mathbb P}}
\newcommand{\I}{{\mathbb I}}
\newcommand{\T}{{\mathbb T}}
\begin{document}
\title{Lower Bounds on the Localisation Length of \\ Balanced Random Quantum Walks}
 
\author{Joachim Asch
\thanks{
Aix Marseille Univ, Universit\'e de Toulon, CNRS, CPT, Marseille, France} ,
Alain Joye
\thanks{
Univ. Grenoble Alpes, CNRS, Institut Fourier, F-38000 Grenoble, France}
}

\date{April 5, 2019 }

\maketitle
\vspace{-1cm}

\thispagestyle{empty}
\setcounter{page}{1}
\setcounter{section}{1}

\setcounter{section}{0}

\abstract{ We consider the dynamical properties of  Quantum Walks defined on the $d$-dimensional cubic lattice, or the homogeneous tree of coordination number $2d$, with site dependent random phases, further characterised by transition probabilities between neighbouring sites equal to $1/(2d)$. We show that the localisation length for these Balanced Random Quantum Walks can be expressed as a combinatorial expression involving sums over weighted paths on the considered graph. This expression provides lower bounds on the localisation length by restriction to paths with weight $1$, which allows us to prove the localisation length diverges on the tree as $d^2$. On the cubic lattice, the method yields the lower bound $1/\ln(2)$ for all $d$, and allows us to bound the localisation length from below by the correlation length of self-avoiding walks computed at $1/(2d)$.}

\thispagestyle{empty}
\setcounter{page}{1}
\setcounter{section}{1}

\setcounter{section}{0}

\section{Introduction}

The last decade has seen Quantum Walks, (QWs for short), deterministic or random, play a growing role in several scientific fields, starting with the modeling of quantum dynamics in various physical situations. Let us simply mention that QWs provide a simple description of the electronic motion in the Quantum Hall geometry \cite{cc, KOK, ABJ2}, while their effectiveness in quantum optics as discrete models to study atoms trapped in time periodic optical lattices or polarized photons propagating in networks of waveguides has found experimental confirmation \cite{Zetal, scia}. Theoretical quantum computing is another field in which QWs prove to be useful as tools in the elaboration or assessment of quantum algorithms, see \cite{sa, MNRS, P}. From a mathematical perspective, QWs have been considered as non commutative analogs of classical random walks, see {\it e.g.} \cite{Ko, skj, gnvw}, or as subject of study in the spectral analysis of (random) unitary operators, in the line of \cite{bhj, cmv, J1, HJS, dfv}. See the reviews and book \cite{va, J3, S1, ABJ3} for more applications, mathematical results, and references. 

The present paper addresses the mathematical (de)localisation properties of certain Random Quantum Walks (RQWs for short) defined on infinite graphs, for the discrete time random quantum dynamical system defined by the iterates of the RQW unitary operator. Anderson localisation is ubiquitous in the physics of disordered systems and has been the object of intense research since its discovery in the fifties \cite{ACAT}, see {\it e.g.}  \cite{cl, Ki, AW2}. Delocalisation of quantum particles in presence of disorder, although expected to be present in high dimensional lattices, is however difficult to exhibit; it has been proven to hold for the Anderson model on the tree, or mild variants thereof \cite{Kl, W}. In this context, certain RQWs viewed as substitutes of the random evolution operators generated by the discrete Anderson Hamiltonians, have proven to be mathematically more tractable and to display analogous behaviours regarding the transport properties: these RQWs describe the motion of a particle with internal degree of freedom, or quantum walker, hopping on the sites an infinite underlying graph $\G^d$, standing for the cubic lattice $\Z^d$ or $\cT_{2d}$, the homogeneous tree of coordination number $2d$, in a static random environment. The internal degree of freedom, or coin state, lives in $\C^{2d}$ and the deterministic part of the walk is defined as follows. Given a natural basis in $\C^{2d}$, the one time step unitary evolution $U(C)$ is obtained by the action of a unitary matrix $C\in U(2d)$ on the coin state of the particle, followed by the action of a coin state conditioned shift $S$ which moves the particle to its nearest neighbours on $\G^d$.  Static disorder is introduced in the model via {\em  i.i.d. random phases} used to decorate the coin matrix $C$ in such a way that the unitary coin state update becomes {\em site-dependent} and random. The coin matrix $C$ of the resulting random unitary operator $U_\omega(C)$, called the {\it skeleton}, is a parameter which, in a sense, monitors the strength of the disorder; see the next section for precise definition. 

Restricting attention to RQWs constructed this way,  we recall some of their properties. RQWs $U_\omega(C)$ defined on the one dimensional lattice exhibit dynamical localisation for all non-trivial choice of skeleton $C$, \cite{JM, ASW},  see also \cite{Betal} for results on CMV matrices by a different approach. Moreover, RQWs that have skeleton close to certain permutation matrices are known to exhibit dynamical localisation on the cubic lattice $\Z^d$, $d>1$  \cite{J2,J3}; a result analogous to the strong disorder regime in the Anderson model framework. To complete the analogy, RQW's $U_\omega(C)$ defined on homogeneous trees are proven to undergo a localisation-delocalisation spectral transition in \cite{HJ1}, when the skeleton $C$ moves from neighbourhoods of certain permutation matrices  to neighbourhoods of other permutation matrices, akin to the spectral transition occurring for the Anderson model on the tree as a function of the strength of the disorder \cite{W}. See also \cite{ABJ4} for delocalisation results of a topological nature for the Chalker Coddington model.

The goal of the present paper is to analyse the unitary operator $U_\omega(C)$ in case the skeleton $C$ lies as far as possible from the permutation matrices, for which  localisation results hold, and to investigate the possible delocalisation properties of those RQWs. Therefore we consider RQWs associated to {\it balanced} skeletons $C\in U(2d)$ characterised by the fact that in the natural basis the moduli of its entries are all equal to $1/\sqrt {2d}$. This amounts to saying that the quantum mechanical transition probabilities induced by $U_\omega(C)$ between neighbouring sites are all equal to $1/(2d)$, irrespective of the realisation of the random phases decorating the balanced skeleton. Such RQWs are called {\it Balanced Random Quantum Walks}, BRQWs for short. 
That BRQWs are likely to exhibit delocalisation properties on the cubic lattice is not granted. However, numerics obtained for the Chalker-Coddington model suggest delocalisation for values of the parameters of the model that correspond, essentially, to the balanced behaviour introduced above, see \cite{KOK}; hence, the analogy is tempting. 
When considered on the tree $\cT_{2d}$, BRQWs are thus neither related to localisation, nor to delocalisation, {\it a priori}, and correspond to values of the parameter $C$ that are not covered by the analysis of \cite{HJ1}. 

Our results concern {\it lower bounds on the localisation length} of BRQWs $U_\omega(C)$ defined on the cubic lattice $\Z^d$ or the tree $\cT_{2d}$, making use of two characteristic specificities of BRQWs operators: uniform transition probabilities between neighbouring sites and uniform distribution of the iid random phases. The localisation length $\cL$ we consider is defined, informally, as the inverse of the largest $\alpha>0$ such that 
$\limsup_{n\ra\infty}\E(\|e^{\alpha |X|/2}U_\omega^n(C) \ \psi_0\|^2)<\infty$, for all initial states $\psi_0$ supported on a single site of $\G^d$ with given internal degree of freedom, 
and $|X|$ is the multiplication operator by the norm of the position on $\G^d$. The localisation length defined similarly for compactly supported initial states is then bounded below by the one we consider. The fact that RQWs couple nearest neighbours on $\G^d$ only allows for a head on approach of the localisation length.

BRQWs are essentially parameter free models, in the sense that once the entries of the skeleton have fixed amplitudes $1/\sqrt{2d}$, only their (correlated) phases remain. Hence, we investigate the large $d$ limit of $\cL$, that we will dub (improperly for the tree case) the large dimension limit, a regime in  which delocalisation is intuitively more likely to occur, so that $\cL$ is more likely to be large. Our first contribution to this question is a reformulation of the problem in terms of a combinatorics problem on the underlying graph $\G^d$, presented in Section \ref{btoc} . Within this framework, lower bounds on the localisation length are obtained via the analysis of partition functions defined on certain subsets of paths on $\G^d$, with weights related to the phases of the entries of the balanced skeleton in $U_\omega(C)$. Using this framework for the homogeneous tree $\cT_{2d}$ in Section \ref{improvedtree}, we get the large $d$ behaviour $\cL \geq 2d^2(1+ O(1/d))$ in Theorem \ref{locletree}. We discuss this approach in  the case of the cubic lattice $\Z^d$ in Section \ref{lbszd}. We cannot show that the localisation length diverges with the dimension $d$, but get the lower bound $\cL\geq 1/\ln(2)$, for all $d\geq 1$, see Theorem \ref{thmone}. From a more general perspective, we relate the partition functions used to bound the localisation length to partition functions or susceptibilities considered in the study of self-avoiding walks in $\Z^d$ in the closing Section \ref{seccor}. In particular, we show in Proposition \ref{corloc} that that the localisation length satisfies $\cL\geq \xi_d(1/(2d))$, where $\xi_d(1/(2d))$ is the correlation length of self-avoiding walks in $\Z^d$, at the critical value of simple random walks. The large dimension analysis of this expression remains to be performed which, according to experts in the field, represents a nontrivial task.\\

\noindent
{\bf\large  Acknowledgments} \vspace{.2cm}

This work is partially supported by the ANR grant NONSTOPS (ANR-17-CE40-0006-01), and by the program ECOS-Conicyt C15E10. We acknowledge the warm hospitality and support of the Centre de Recherches Math\'ematiques in Montr\'eal during the last stage of this work: A.J. was a CRM-Simons Scholar-in-Residence Program, and J.A. was a guest of the Thematic Semester Mathematical Challenges in Many Body Physics and Quantum Information. We wish to thank V. Beffara, L. Coquille, and S. Warzel for several enlightening discussions and G. Slade for correspondance, at various stages of this work.  Also, we are grateful to the referees for constructive remarks.

\section{Balanced Random Quantum Walk}

By Balanced Random Quantum Walk, BRQW for short, we mean a quantum walk defined either on the cubic lattice $\Z^d$ or on the homogeneous tree $\cT_{2d}$, where $d\in \N$ is half the coordination number in the latter case, such that the quantum mechanical transition amplitude between neighbouring sites has uniform modulus and random argument uniformly distributed on the torus.  We briefly recall the basics about RQWs below; for more details, see \cite{HJ1, HJ2} \bigskip

\subsection{Random quantum walks on $\cT_{2d}$}
Let $\cT_{2d}$ denote the homogeneous tree of degree $2d$ of the free group $F_{\{a_1, \dots, a_d\}}$ generated by the alphabet
\be
A_{2d}=\{a_1, \dots, a_d,  a_1^{-1}, \dots, a_d^{-1}\},
\ee 
with $a_j a_j^{-1}=a_j^{-1}a_j=e$, $j=1,\dots, d$, $e$ being the neutral element of the group. 
A vertex of $\cT_{2d}$ denoted by $e$ is chosen to be the root of the tree. Each vertex $x=x_{1}x_{2}\dots x_{n}$, $n\in\N$ of $\cT_{2d}$ is a reduced word of finitely many letters from the alphabet $A_{2d}$ and an edge of $\cT_{2d}$ is a pair of vertices $(x,y)$ such that $xy^{-1}\in A_{2d}$. Any pair of vertices $x$ and $y$ can be joined by a unique set of edges, or path in $\cT_{2d}$ and the number of nearest neighbours of any vertex in $\cT_{2d}$ is thus $2d$. 
We identify $\cT_{2d}$ with its set of vertices, and define the configuration Hilbert space of the walker by
\be
l^2(\cT_{2d})=\Big\{\psi=\sum_{x\in \cT_{2d}}\psi_x |x\ket \ \mbox{s.t.} \  \psi_x\in \C,  \ \sum_{x\in \cT_{2d}}|\psi_x|^2<\infty\Big\},
\ee
where $|x\ket$ denotes the element of the canonical basis of $l^2(\cT_{2d})$ which sits at vertex $x$.
The coin Hilbert space (or spin Hilbert space) of the quantum walker on $\cT_{2d}$ is $\C^{2d}$. The elements of the ordered canonical basis of $\C^{2d}$ are labelled by letters in the alphabet $A_{2d}=\{a_1, \dots, a_d,  a_1^{-1}, \dots, a_d^{-1}\}$, and
the total Hilbert space is
\be\label{canbas}
\cK_{d} = l^2(\cT_{2d})\otimes \C^{2d} \ \mbox{ with canonical basis }\  \big\{x\otimes \tau\equiv |x\ket\otimes |\tau\ket ,\ \ x\in \cT_{2d}, \tau\in A_{2d}\big\}.
\ee
The quantum walk on the tree is defined as the composition of a unitary update of the coin (or spin) variables in $\C^{2d}$ followed by a coin  state dependent shift on the tree. Let $C$ be a unitary matrix on $\C^{2d}$, {\it i.e.} $C\in U(2d)$. The unitary update operator given by $\I\otimes C$ acts on the canonical basis of $\cK_d$ as
\be\label{reshuffle}
(\I\otimes C) x\otimes \tau=|x\ket\otimes |C\tau\ket=\sum_{\sigma\in A_{2d}} C_{\sigma\tau}\, x\otimes \sigma,
\ee
where $\{C_{\sigma\tau}\}_{(\sigma,\tau)\in A_{2d}^2}$ denote the matrix elements of $C$.
The coin  state dependent shift $S$ on $\cK_{d}$ is defined by
\be\label{dirsumshift}
S={\sum}_{\tau\in A_{2d}} S_\tau\otimes |\tau\ket\bra \tau| 
\ee
where for all $\tau\in A_{2d}$ the unitary operator $S_\tau$ is a shift that acts on $l^2(\cT_{2d})$ as  
\be
S_\tau|x\ket=|x\tau\ket, \ \ \forall \tau \in A_{2d}, \ \ \forall x\in \cT_{2d}.
\ee
Note that $S_\tau^{-1}=S_\tau^*=S_{\tau^{-1}}$. 
A homogeneous quantum walk on $\cT_{2d}$ is then defined as the one step unitary evolution operator on $\cK_d=l^2(\cT_{2d})\otimes \C^{2d}$ given by 
\be\label{defeven}
U(C)=S(\I\otimes C)
={\sum_{\tau\in A_{2d}, x\in \cT_{2d}} |x\tau\ket\bra x |\otimes |\tau\ket\bra \tau| C},
\ee
where $C\in U({2d})$ is a parameter.

Consider now a family of coin matrices $\cC=\{C(x)\in U({2d})\}_{x\in\cT_{2d}}$, indexed by the vertices $x\in\cT_{2d}$. We generalise the construction to quantum walks with site dependent coin matrices by means of the definition
\be\label{explicitsitedep}
U(\cC)=\sum_{\tau\in A_{2d},
 x\in \cT_{2d}} |x\tau\ket\bra x |\otimes |\tau\ket\bra \tau|  C(x).  
\ee
In order to deal with RQWs, we introduce a probability space $\Omega=\T^{\cT_{2d}\times A_{2d}}$,  $\T=\R/(2\pi \Z)$, with $\sigma$ algebra generated by the cylinder sets and measure $\P=\otimes_{x\in\cT_{2d}\atop \tau\in A_{2d}}d\nu$ where $d\nu$ is a probability measure on $\T$. 
Let $\{\omega_{x,\tau}\}_{x\in \cT_{2d}, \tau\in A_{2d} }$ be a set of i.i.d. random variables on the torus $\T$ with common distribution $d\nu$. We will note $\Omega\ni \omega=\{\omega_{x, \tau}\}_{x\in \cT_{2d}, \tau\in A_{2d} }$.\bigskip

The RQWs we consider are constructed by means of families of site dependent random coin matrices. Let  $\cC_\omega=\{C_\omega(x)\in U({2d})\}_{x\in \cT_{2d}}$ be the family of random coin matrices depending on a fixed matrix $C\in U({2d})$, the {\it skeleton},  where, for each $x\in \cT_{2d}$, $C_\omega(x)$ is defined by its matrix elements 
\be
C_\omega(x)_{\tau \sigma}=e^{i\omega_{x\tau, \tau}}C_{\tau\sigma}, \ \ (\tau,\sigma)\in A_{2d}\times A_{2d}.
\ee 
The site dependence appears only in the random phases of the matrices $C_\omega(x)$, which have a fixed skeleton $C\in U({2d})$. The RQWs considered  depend parametrically on $C\in U({2d})$ and are defined by the random unitary operator
\be\label{defrqw}
U_\omega(C):=U(\cC_\omega) \ \mbox{on} \ \cK=l^2(\cT_{2d})\otimes \C^{2d}.
\ee
Observe that defining a random diagonal unitary operator on $\cK_d$ by 
\be\label{dd}
\D_\omega x\otimes \tau = e^{i\omega_{x, \tau}} x\otimes \tau, \ \ \forall (x,\tau)\in \cT_{2d}\times A_{2d},
\ee
we have the identity
\be\label{iddef}
U_\omega(C)=\D_\omega U(C) \ \ \mbox{on } \cK_d,
\ee
which confirms that $U_\omega(C)$ is manifestly unitary. Note also that the transition amplitudes induced by $U_\omega(C)$ are nonzero for nearest 
neighbours on the tree only and 
\be\label{tramp}
\bra x\otimes \tau | U_\omega(C) y\otimes \sigma\ket=e^{i\omega_{x,\tau}}\bra x\otimes \tau | U(C) y\otimes \sigma\ket.
\ee
\begin{rem}
RQWs of a similar flavour  defined on trees with odd coordination number are studied  in \cite{HJ1}.
\end{rem}

\subsection{Random quantum walks on $\Z^d$}

The definition of a RQW $U_\omega(C)$ on $\Z^d$ instead of $\cT_{2d}$ is quite similar: the sites $x\in\cT_{2d}$ are replaced by $x\in\Z^d$ so that the configuration space $l^2(\cT_{2d})$ is replaced by $l^2(\Z^d)$ but the coin space remains the same, {\it i.e.} $\C^{2d}$. Hence the complete Hilbert space is $\cK_d=l^2(\Z^d)\otimes\C^{2d}$ and the update operator $\I\otimes C$ is the same as on the tree. The definition of the shifts $S_\tau$ in $S=\sum_{\tau\in A_{2d}}S_\tau \otimes |\tau\ket\bra\tau|$, see (\ref{dirsumshift}), needs to be slightly adapted.
We associate the letters $\tau$ of the alphabet $A_{2d}$ with multiples of the canonical basis vectors $\{e_1, \dots, e_d\}$ of $\R^d$ as follows
\be
a_\tau  \leftrightarrow e_\tau,\ a^{-1}_{\tau}  \leftrightarrow -e_\tau, \ \ \tau\in\{1,\dots, d\},
\ee
and define the action of $S_\tau$ on $l^2(\Z^d)$ accordingly: for any $x=(x_1, \dots ,x_d)\in \Z^d$
\be
S_{a_\tau}|x\ket=|x+e_\tau\ket, \ S_{a_\tau^{-1}}|x\ket=|x-e_\tau\ket, \ \ \tau\in\{1,\dots, d\}.
\ee
We shall sometimes abuse notations and write for short $S_\tau |x\ket=|x+\tau\ket$, $\tau\in A_{2d}$. The random quantum walk is then defined by $U_\omega(C)$, as in (\ref{iddef}).

\bigskip

The use of the same symbol $\cK_d$ for the total Hilbert space is no coincidence: we deal with the cases of the tree and cubic lattice in parallel in what follows, denoting by $\G^d$ the graph corresponding either to $\Z^d$ or $\cT_{2d}$.  We shall use the notations introduced for the case of the tree only, with the understanding that the replacements just described yield the corresponding statements for the lattice case.

\subsection{Localisation Length}

We will say that {\it averaged dynamical localisation} occurs when there exists an $\alpha>0$ such that for  $\psi_0=e\otimes \tau_0\in \cK_d$ 
\be\label{exploc}
\sup_{n\in \Z}\E(\|e^{\alpha |X|/2 }U_\omega^n(C)\psi_0\|^2) < \infty, 
\ee
where $X$ denotes the position operator on $\G^d$, and $|X|$ is the multiplication operator given by either the distance to the root of the point of $\cT_{2d}$ considered, or the $l^\infty$ or $l^1$ norm of the point considered in $\Z^d$. Averaged localisation may happen on certain spectral subspaces of $U_\omega(C)$ only, in which case $\psi_0$ is restricted to these subspaces in (\ref{exploc}). 
This notion is weaker than exponential dynamical localisation in which, essentially, the $\sup_{n\in\N}$ is inside the expectation, see Remark \ref{adldl} below. It turns out to be useful in defining the quantitative criterion we need in our investigation of the delocalisation properties of RQWs.  

We define the (dynamical) localisation length of  a RQW on $\G^d$ as follows.
\begin{definition} The localisation length $\cL$ of the model is given by $\cL=1/\alpha_s$, where
\be
\alpha_s=\sup \{\alpha\geq 0 \ | \ \mbox{s.t.} \ (\ref{exploc}) \ \mbox{holds}\}
\ee
and with the convention $\infty=1/0$.
\end{definition}

Considering $\psi_0=e\otimes \tau_0$, we will thus analyse the large $n$ behaviour of the expectation of 
\be\label{laplace}
\|e^{\alpha |X|/2}U_\omega^n(C) \ e\otimes\tau_0\|^2
=\sum_{x, \tau}e^{|x|\alpha}|\bra x\otimes \tau |U_\omega^n(C)\ e\otimes\tau_0\ket|^2,
\ee
where $\sum_{\tau\in A_{2d}}|\bra x\otimes \tau |U_\omega^n\ e\otimes\tau_0\ket|^2$ denotes the quantum mechanical probability to find the quantum walker on site $x\in\G^d$, knowing it started at time zero on the root $e$, with coin state $\tau_0$. Hence,
if the expectation of the right hand side of equation (\ref{laplace}) is finite uniformly in $n\in \N$, for some $\alpha>0$, it means that the limiting averaged distribution of the position of the quantum walker has moments of exponential order and that the localisation length $\cL$ satisfies $\cL<1/\alpha$. 
\bigskip

Our aim is to exhibit an $\alpha_c\geq 0$, as small as possible, such that $\alpha\geq \alpha_c$ implies 
\be\label{alpha_c}
\sup_{n\in \Z}\E(\|e^{\alpha |X|/2}U_\omega^n(C) \ e\otimes\tau_0\|^2)=\infty,
\ee 
for certain choices of coin matrix $C$ and phase distributions, that we will call balanced.
Hence, if averaged dynamical localisation takes place for some $\alpha$, then 
$\alpha<\alpha_c$, so that the localisation length of the model is bounded below: $\cL\geq1/\alpha_c$. Moreover, (\ref{alpha_c}) implies that for all 
$\alpha\geq \alpha_c$, 
\be
\E(\sup_{n\in \Z} \|e^{\alpha |X|/2}U_\omega^n(C) \ e\otimes\tau_0\|^2)=\infty,
\ee
which we interpret as a step towards  delocalisation. This step is all the more pertinent that $\alpha_c$ is small, {\it i.e.} $\cL$ is large.
\begin{rem}\label{adldl}
If {\em strong exponential dynamical localisation} holds, characterised by the existence of a dynamical localisation length $0<1/\mu<\infty$ and a constants $c<\infty$ such that, for all $R\geq 1$, 
\be\label{sedl}
\sum_{{x\in \G^d}\atop {|x|\geq R}}\E(\sup_{n\in \Z}|\bra x\otimes \tau | U_\omega^n(C)  \ e\otimes\tau_0\ket |^2)\leq c e^{-\mu R}
\ee 
see \cite{AW2, HJS}, then {\em averaged dynamical localisation} (\ref{exploc}) holds, for all $\alpha < \mu$. Hence, the existence of
$\alpha_c$ such that (\ref{alpha_c}) holds thus also implies the lower bound $1/\mu\ge 1/\alpha_c$.
\end{rem}

\section{From BRQW to Combinatorics}\label{btoc}

We focus on {\em balanced random quantum walks} on $\Z^d$ and $\cT_{2d}$  that we now define.
\begin{definition} A  balanced random quantum walk (BRQW) on $\G^d$ is characterised by a uniform distribution of random phases
\be
d\nu(\theta) =\frac{d\theta}{2\pi},
\ee
and by a balanced skeleton matrix $C^b\in U({2d})$ whose elements satisfy
\be\label{crit}
C^b_{\tau, \tau'}=\frac{e^{i\alpha_{\tau,\tau'}}}{\sqrt{2d}}, \ \, \alpha_{\tau,\tau'}\in \R, \ \ \forall (\tau, \tau')\in A_{2d}^2.
\ee
\end{definition}
\begin{rems}
i) There exist balanced skeleton matrices in all dimensions. An example is provided by the unitary matrix related to the discrete Fourier transform such that $C^b_{j,k}=e^{-i\pi jk/d}/\sqrt{2d}$, where $j,k\in\{0,\dots,2d-1\}$. Another example is provided by Hadamard matrices whose coefficients are equal to $\pm 1/\sqrt{2d}$, which exist for $d=2^k$, $k\in\N$, at least.\\
ii) BRQW are essentially parameter free models, besides the dimension $d$.
\end{rems}
For a BRQW $U_\omega(C^b)$, we have thanks to (\ref{tramp})
\be
\bra x\otimes \tau | U_\omega(C^b) \,y\otimes \sigma\ket=\frac{e^{i(\omega_{x,\tau}+\alpha_{\tau,\sigma})}}{\sqrt{2d}}\delta_{x,y\tau}.
\ee
Thus, the quantum mechanical transition probabilities between neighbouring sites are all equal to $1/(2d)$, and zero otherwise, be it on the tree or the cubic lattice. 
\bigskip
Thus, for BRQWs (noted $U$ below), we have the finite sums, for any $n>1$,
\bea\nonumber
\bra x_n\otimes \tau_n |U^n\ e\otimes\tau_0\ket\hspace{-.3cm}&=&\hspace{-.5cm}\sum_{{x_1,\dots, x_{n-1}}\atop {\tau_1,\dots, \tau_{n-1}}}
\hspace{-.2cm}\bra x_n\otimes \tau_n  | U x_{n-1}\otimes \tau_{n-1}\ket\cdots \bra x_2\otimes \tau_2|U x_1\otimes \tau_1\ket\bra x_1\otimes \tau_1|U e\otimes\tau_0\ket\\
&=&\hspace{-.2cm}\frac{1}{\sqrt{2d}^{n}}\hspace{-.2cm}\sum_{ {\tau_1,\dots, \tau_{n-1}\ \mbox{\tiny s.t.} }\atop {\tau_1\cdots\tau_{n-1}\tau_n=x_n}}
\hspace{-.3cm}e^{i\sum_{j=1}^{n}(\alpha_{\tau_j,\tau_{j-1}}+\omega_{x_j,\tau_{j}})}\ \ \mbox{with} \ \ x_j=\tau_1\cdots\tau_{j-1}\tau_j.
\eea
The constraint on the $\tau_j$s can be expressed by saying that the summation takes place on all paths  of length $n$ on the graph $\G^d$ from the root $e$ to the end point $x_n$, with steps of length one and prescribed last step $\tau_n$. 
\begin{rems}
i) Given the initial point $x_0 = e$ and initial coin state $\tau_0$, it is equivalent to have the sequence of points visited $\{x_1,\dots, x_{n-1}, x_n\}$ or to have the sequence of coin states $ \{\tau_1,\dots, \tau_{n-1}, \tau_n\}$.\\
ii) The random phases $\omega_{x_j, \tau_j}$ are random variables attached to the oriented edge $(x_j,x_{j-1})$, since $x_{j-1}=x_j\tau_j^{-1}$ is defined uniquely by $x_j$ and $\tau_j$. We adopt this point of view from now on, using the notation
\be
\omega_{x_j, \tau_j}=\omega(x_j,x_{j-1}), \ \ \forall \ j=1,\dots, n.
\ee
By contrast, the deterministic phases $\alpha_{\tau_j,\tau_{j-1}}$ depend on the orientation of the pair of edges $(x_{j-1},x_{j-2})$ and $(x_j,x_{j-1})$, (with a fictitious point $x_{-1}=x_0\tau_0^{-1}$, in case $j=1$).
\end{rems}
Consequently, the right hand side of (\ref{laplace}) reads
\be
\frac{1}{{(2d)}^{n}}\sum_{{x_n, y_n}\atop { \color{black} \tau_n, \sigma_n}}\delta_{x_n, y_n}{\color{black}\delta_{\tau_n,\sigma_n}}e^{\alpha |x_n|/2}e^{\alpha |y_n|/2}\hspace{-.3cm}\sum_{ {{\tau_1,\dots, \tau_{n-1}\ \mbox{\tiny s.t.} }\atop {\tau_1\cdots\tau_{n-1}\tau_n=x_n}}
\atop {{\sigma_1,\dots, \sigma_{n-1}\ \mbox{\tiny s.t.} }\atop {\sigma_1\cdots\sigma_{n-1}\sigma_n=y_n}}}
\hspace{-.3cm}e^{i\sum_{j=1}^{n}(\alpha_{\tau_j,\tau_{j-1}}+\omega(x_j,x_{j-1})-\alpha_{\sigma_j,\sigma_{j-1}}-\omega(y_j,y_{j-1}))}, 
\ee
with 
\be\label{cond}
x_j=\tau_1\cdots\tau_{j-1}\tau_j, \ \ y_j=\sigma_1\cdots\sigma_{j-1}\sigma_j, \ \ \forall j\in\{1,\dots, n\}.
\ee
Introducing the notation $\overrightarrow{x_n}=\{e,x_1,\dots, x_{n-1}, x_n\}$ for paths of length $n$, with steps of length one, and keeping
$\tau_j=x^{-1}_{j-1}x_j$ wherever more convenient (and for $\sigma_j=y^{-1}_{j-1}y_j$), we can write  the right hand side of (\ref{laplace}) using 
$\delta_{x_n, y_n}=\sum_{x}\delta_{x, x_n}\delta_{x, y_n}$,  as
\be\label{reform}
\frac{1}{{(2d)}^{n}}\sum_{x, {\color{black} \tau}}e^{\alpha |x|}\hspace{-.0cm}\sum_{ \overrightarrow{x_n}, \overrightarrow{y_n}}
\hspace{-.0cm}\delta_{x, x_n}\delta_{x, y_n}{\color{black} \delta_{\tau, \tau_n}\delta_{\tau, \sigma_n}}e^{i\sum_{j=1}^{n}(\alpha_{\tau_j,\tau_{j-1}}+\omega(x_j,x_{j-1})-\alpha_{\sigma_j,\sigma_{j-1}}-\omega(y_j,y_{j-1}))}, 
\ee
with  (\ref{cond}) again.
At this point, we want take the expectation over the i.i.d. uniformly distributed random phases. We need the following
\begin{definition}\label{defpc} Given a path  in $\G^d$ of length $n$ starting at $e$,  $\overrightarrow{x_n}=\{e,x_1,\dots, x_{n-1}, x_n\}$, we define its {\em phase content }, $PC(\overrightarrow{x_n})$, by
\be
PC(\overrightarrow{x_n})=\{m_{\overrightarrow{x_n}}(y,z), (y,z)\in \G^d\times\G^d\} \ \ \mbox{where} \ m_{\overrightarrow{x_n}}(y,z)=\hspace{-.2cm}\sum_{(x_k,x_{k-1})\subset \overrightarrow{x_n}}\delta_{(x_k,x_{k-1}), (y,z)}
\ee
and we say that two such paths are {\em equivalent} if and only if
\be\label{defequiv}
\overrightarrow{x_n}\sim \overrightarrow{y_n} \Longleftrightarrow PC(\overrightarrow{x_n})=PC(\overrightarrow{y_n}).
\ee
Denoting by $\lbrack \overrightarrow{x_n}\rbrack$ the equivalence class of $\overrightarrow{x_n}$ and by $P^e_n$ the set of all paths of length $n$ starting at $e$ we have
\be
P^e_n=\bigcup_{\lbrack \overrightarrow{x_n}\rbrack \in P^e_n/\sim }\lbrack \overrightarrow{x_n}\rbrack.
\ee
\end{definition}
The integer $m_{\overrightarrow{x_n}}(y,z)$ gives the number of times the oriented edge $(y,z)$ is visited by the path $\overrightarrow{x_n}$. Note  that $m(y,z)=0$ for all but a finite number of edges $(y,z)$, and that $P^e_n$ is finite.

\begin{lem}
With the notations above,
\be\label{phac}
\E(\|e^{\alpha |X|/2}U_\omega^n \ e\otimes\tau_0\|^2)
=\frac{1}{{(2d)}^{n}}\sum_{{x\in \G^d}\atop{\color{black} \tau\in A_{2d}}}e^{\alpha |x|}\hspace{-.3cm}\sum_{\lbrack \overrightarrow{\tilde x_n}\rbrack\in P^e_n/\sim}\left| \sum_{{\overrightarrow{x_n}\in\lbrack \overrightarrow{\tilde x_n}\rbrack}}
\delta_{x, x_n}{\color{black} \delta_{\tau, \tau_n}}e^{i\sum_{j=1}^{n}\alpha_{\tau_j,\tau_{j-1}}}\right|^2.
\ee
\end{lem}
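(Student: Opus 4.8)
The plan is to take the expectation over the random phases $\omega$ directly in the reformulation (\ref{reform}), exploiting that these phases are i.i.d.\ and uniformly distributed on the torus, and then to regroup the resulting sum according to the equivalence relation of Definition \ref{defpc}. The whole point is that averaging over the random phases collapses the double path sum onto pairs of paths with identical phase content, while the deterministic phases $\alpha_{\tau_j,\tau_{j-1}}$ survive and assemble into a modulus squared.

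Concretely, I would first rewrite the sum of random phases along a path in terms of the oriented-edge multiplicities of Definition \ref{defpc}. Since $\omega(x_j,x_{j-1})$ is attached to the oriented edge $(x_j,x_{j-1})$, one has
\[
\sum_{j=1}^{n}\omega(x_j,x_{j-1})=\sum_{(y,z)}m_{\overrightarrow{x_n}}(y,z)\,\omega(y,z),
\]
and likewise for $\overrightarrow{y_n}$ with its own multiplicities $m_{\overrightarrow{y_n}}(y,z)$. The random exponential in (\ref{reform}) then reads $\exp\!\left(i\sum_{(y,z)}\big[m_{\overrightarrow{x_n}}(y,z)-m_{\overrightarrow{y_n}}(y,z)\big]\omega(y,z)\right)$. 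Because the $\{\omega(y,z)\}$ are independent, the expectation factors over edges, and for the uniform law $d\nu(\theta)=d\theta/(2\pi)$ each factor is $\E(e^{im\omega})=\delta_{m,0}$ for $m\in\Z$. Hence $\E(\cdots)=1$ if $m_{\overrightarrow{x_n}}(y,z)=m_{\overrightarrow{y_n}}(y,z)$ for every oriented edge, and $0$ otherwise; that is, the expectation is precisely the indicator of $PC(\overrightarrow{x_n})=PC(\overrightarrow{y_n})$, i.e.\ of $\overrightarrow{x_n}\sim\overrightarrow{y_n}$.

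Next I would restrict the double sum over $(\overrightarrow{x_n},\overrightarrow{y_n})$ to equivalent pairs and group them by their common class $[\overrightarrow{\tilde x_n}]\in P^e_n/\!\sim$. Within a fixed class only the deterministic phases remain, and the summand factors as $e^{i\sum_j\alpha_{\tau_j,\tau_{j-1}}}$ times the complex conjugate $\overline{e^{i\sum_j\alpha_{\sigma_j,\sigma_{j-1}}}}$, with the constraints $\delta_{x,x_n}\delta_{\tau,\tau_n}$ and $\delta_{x,y_n}\delta_{\tau,\sigma_n}$ factoring in the same way. The sum over $\overrightarrow{x_n},\overrightarrow{y_n}\in[\overrightarrow{\tilde x_n}]$ therefore becomes a product of a sum over $\overrightarrow{x_n}$ and its conjugate, i.e.\ a modulus squared, and collecting these over all classes yields exactly (\ref{phac}).

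The step requiring care is the bookkeeping distinction between the two kinds of phases. The random phases $\omega(y,z)$ depend only on the edge multiplicities, so they are constant across a given equivalence class and cancel in the path difference; this is what forces the equivalence constraint after averaging. By contrast, the deterministic phases $\alpha_{\tau_j,\tau_{j-1}}$ depend on the \emph{ordered} sequence of steps, and thus genuinely differ among the distinct representatives of a single class, which is exactly why the inner sum over $\overrightarrow{x_n}\in[\overrightarrow{\tilde x_n}]$ does not reduce to one term but retains the interference recorded by the squared modulus. Keeping these two roles cleanly separated is the only real subtlety; the rest is routine reindexing.
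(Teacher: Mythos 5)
Your proof is correct and takes essentially the same route as the paper: both compute the expectation of (\ref{reform}) using independence of the phases and $\E(e^{im\omega})=\delta_{m,0}$ to reduce the double path sum to pairs with equal phase content, then split the sum over $P^e_n$ into equivalence classes so that the remaining deterministic phases factor into the squared modulus of (\ref{phac}). Your explicit rewriting of the random phase sum via the edge multiplicities $m_{\overrightarrow{x_n}}(y,z)$, and your closing remark distinguishing the edge-dependent random phases from the order-dependent deterministic phases, merely spell out details the paper leaves implicit.
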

\begin{proof}
By independence and the property $\E(e^{i n \omega(x_j,x_{j-1})})=\delta_{n,0}$ for any $j$, the expectation of (\ref{reform}) is nonzero only if the paths $\overrightarrow{x_n}$ and $\overrightarrow{y_n}$  over which the summation is performed have the same phase contents. Hence
\begin{align}\label{middle}
\E(\|e^{\alpha |X|/2}U_\omega^n &\ e\otimes\tau_0\|^2)=\\ \nonumber
&\frac{1}{{(2d)}^{n}}\sum_{x, {\color{black} \tau}}e^{\alpha |x|}\hspace{-.0cm}\sum_{\overrightarrow{x_n}}
\sum_{{\overrightarrow{y_n}\ \mbox{\tiny s.t.}}\atop {PC(\overrightarrow{y_n})=PC(\overrightarrow{x_n})}}
\hspace{-.5cm}\delta_{x, x_n}\delta_{x, y_n}{\color{black} \delta_{\tau, \tau_n}\delta_{\tau, \sigma_n}}e^{i\sum_{j=1}^{n}(\alpha_{\tau_j,\tau_{j-1}}-\alpha_{\sigma_j,\sigma_{j-1}})}.
\end{align}
Splitting the sum over $\overrightarrow{x_n}\in P^e_n$ into a sum over all distinct equivalence classes $\lbrack \overrightarrow{\tilde x_n}\rbrack$, and using (\ref{defequiv}) in the sum over $\overrightarrow{y_n}$, we get (\ref{phac}).\ep
\end{proof}
\begin{rems} 
0) The computation of the exponential moments of the position operator averaged over the disorder is thus equivalent to a combinatorial problem. \\ 
i) The function $e^{\alpha |x|}$ can be replaced by any other function of the position. The average of the quantum mechanical 
position moments at time $n$ are obtained with $|x|^p$, $p>0$.\\
ii) For a given equivalence class $\lbrack \overrightarrow{\tilde x_n}\rbrack$,  the sum $ \sum_{{\overrightarrow{x_n}\in\lbrack \overrightarrow{\tilde x_n}\rbrack}}
\delta_{x, x_n}{\color{black} \delta_{\tau, \tau_n}}e^{i\sum_{j=1}^{n}\alpha_{\tau_j,\tau_{j-1}}}$ may vanish, as illustrated by the examples below.
\end{rems}
Let us review  some general properties of the combinatorial expression
\be\label{defsn}
S_n^{\tau_0}(\alpha)=\frac{1}{{(2d)}^{n}}\sum_{{x\in \G^d}\atop {\color{black} \tau\in A_{2d}}}e^{\alpha |x|}\hspace{-.3cm}\sum_{\lbrack \overrightarrow{\tilde x_n}\rbrack\in P^e_n/\sim}\left| \sum_{{\overrightarrow{x_n}\in\lbrack \overrightarrow{\tilde x_n}\rbrack}}
\delta_{x, x_n}{\color{black} \delta_{\tau, \tau_n}}e^{i\sum_{j=1}^{n}\alpha_{\tau_j,\tau_{j-1}}}\right|^2\ee
For fixed $n$, $S_n^{\tau_0}(\alpha)$ is finite for any $\alpha\in \C$, and is equal to one if $\alpha=0$. As a function of $\alpha\in\R$, $S_n^{\tau_0}(\cdot)$ is strictly increasing. Moreover, we have the trivial bounds for $\alpha\geq 0$, $1\leq S_n^{\tau_0}(\alpha)\leq e^{n\alpha}$. 
More generally, for a non negative functions $g$ on $\N$ the contribution in (\ref{defsn}) stemming for the summation over $x\in \G^d$ with $|x|\leq g(n)\leq n$ is bounded by
\begin{align}\label{loupboundsn}
\frac{1}{{(2d)}^{n}}\sum_{{{x \in \G^d}\atop {\color{black} \tau\in A_{2d}}} \atop |x|\leq g(n)}e^{\alpha |x|}\hspace{-.3cm}\sum_{\lbrack \overrightarrow{\tilde x_n}\rbrack\in P^e_n/\sim}\left| \sum_{{\overrightarrow{x_n}\in\lbrack \overrightarrow{\tilde x_n}\rbrack}}
\delta_{x, x_n}{\color{black} \delta_{\tau, \tau_n}}e^{i\sum_{j=1}^{n}\alpha_{\tau_j,\tau_{j-1}}}\right|^2
\leq e^{\alpha g(n)}.
\end{align}

Note also that we can consider $\frac1{2d}\sum_{\tau_0\in A_{2d}}S_n^{\tau_0}(\alpha)$, since the existence of a $\alpha_c$ satisfying (\ref{alpha_c}) for this sum implies it satisfies  (\ref{alpha_c}) for some $\tau_0$ as well. Therefore we drop  the superscript $\tau_0$ from the notation.\\

\noindent
{\bf Two-Dimensional Example:}\\
Consider a BRQW on $\G^2$ driven by a $4\times 4$ Hadamard matrix in the basis $\{a_1, a_2, a_1^{-1}, a_2^{-1}\}$
\be\label{exple}
\frac{1}{2}\begin{pmatrix}
-1 & 1 & 1 & 1 \cr
1 & -1 & 1 & 1\cr
1 & 1 & -1 & 1\cr
1 & 1 & 1 & -1
\end{pmatrix}.
\ee
The corresponding quantity $ \sum_{{\overrightarrow{x_n}\in\lbrack \overrightarrow{\tilde x_n}\rbrack}}
\delta_{x, x_n}{\color{black} \delta_{\tau, \tau_n}}e^{i\sum_{j=1}^{n}\alpha_{\tau_j,\tau_{j-1}}}$ thus belongs to $\Z$, and a minus sign is picked up at each step  of the path which doesn't change direction. One readily checks that distinct paths can have the same phase content and are thus equivalent. Moreover, the contribution from a given equivalence class may be zero. 
This is illustrated on Figure 1 by two paths of length six from $e$ to $x$, with prescribed last step $\tau$, belonging to the same equivalence class which is of cardinal 2. The initial coin state $\tau_0$ is not depicted. There is one occurence of two consecutive steps with the  same direction along the left path, whereas there is none along the right path, so that they have opposite contributions. Hence the contributions of this equivalence class vanishes.  
\vspace{.3cm}

\begin{figure}
    \centering
    \begin{subfigure}[b]{0.33\textwidth}
        \includegraphics[width=\textwidth]{./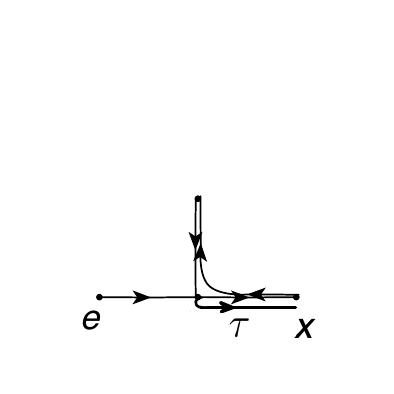}
    \end{subfigure}  \qquad\qquad 
        \begin{subfigure}[b]{0.33\textwidth}
        \includegraphics[width=\textwidth]{./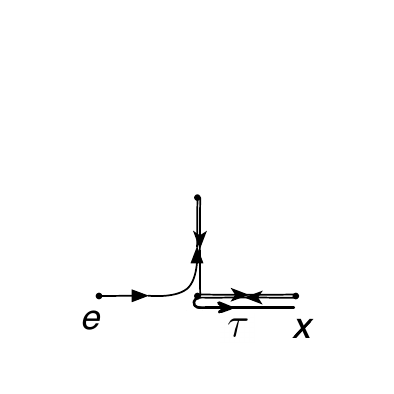}
    \end{subfigure}
    \caption{An equivalence class 
    of paths on  $\mathcal T_{4}$ or $\Z^2$ that does not contribute to (\ref{defsn}).}
\end{figure} 

%\begin{figure}
%    \centering
%    \begin{subfigure}[b]{0.43\textwidth}
%        \includegraphics[scale=0.1]{./treeex.pdf}
%    \end{subfigure}\qquad\hspace{1cm} 
%            \caption{A phase content on $\T_{4}$
%    of two distinct paths only that does not contribute to (\ref{defsn}).}
%\end{figure}

\subsection{Lower Bounds from Restrictions}
In order to avoid dealing with the  combinatorial factor $ \sum_{{\overrightarrow{x_n}\in\lbrack \overrightarrow{\tilde x_n}\rbrack}}
\delta_{x, x_n}{\color{black} \delta_{\tau, \tau_n}}e^{i\sum_{j=1}^{n}\alpha_{\tau_j,\tau_{j-1}}}$, we consider the following  subsets of paths in $P^e_n$:
\begin{definition}
Let $SP_n\subset P^e_n$ be the set of paths the equivalence class of which consists of a {\em single path}, i.e.
\be
\overrightarrow{x} \in SP_n \ \Leftrightarrow \ \# \lbrack \overrightarrow{x}\rbrack =1. 
\ee
Let $SAW_n\subset SP_n$ be the set of self avoiding walks on $\G^d$, i.e. of paths 
$\overrightarrow{x}=\{e,x_1,\dots,x_n\}$ s.t. $x_j\neq x_k$ if $j\neq k$ , and $x_j\neq e$.
\end{definition}
In particular, for $n$ fixed, paths that are sufficiently stretched belong to $SP_n$, whereas self-avoiding walks belong to $SP_n$ by construction. 
%We shall write $\overrightarrow x \sim SP_n$ to mean $\overrightarrow x\in \lbrack \overrightarrow{x_n}\rbrack$, where $x_n\in SP_n$.
Consequently, if $\overrightarrow{\tilde x}_n\in SP_n$,
\be
 \left|\displaystyle\sum_{{\overrightarrow{x_n}\in\lbrack \overrightarrow{\tilde x_n}\rbrack}}
\delta_{x, x_n}{\color{black} \delta_{\tau, \tau_n}}e^{i\sum_{j=1}^{n}\alpha_{\tau_j,\tau_{j-1}}}\right|=\delta_{x, x_n}{\color{black} \delta_{\tau, \tau_n}}.
\ee
Hence, restricting the sum (\ref{defsn}) to equivalence classes of such paths, $S_n(\alpha)$ is bounded from below by quantities that do not depend on the phases of the balanced coin matrix $C^b$:
\be\label{lb}
S_n(\alpha)\geq \frac{1}{{(2d)}^{n}}\sum_{\overrightarrow{x}\in SP_n}e^{\alpha |x_n|}\geq \frac{1}{{(2d)}^{n}}\sum_{\overrightarrow{x}\in SAW_n}e^{\alpha |x_n|}\geq \frac1{2^n}.
\ee
The last inequality corresponds to the very crude lower bound on the number of self-avoiding walks of length $n$ in $\G^d$ given by $d^n$.
We use the notations 
\be\label{partfun}
Z_{SP_n}(\alpha)=\sum_{\overrightarrow{x}\in SP_n}e^{\alpha |x_n|}, \ \mbox{and} \ \ Z_{SAW_n}(\alpha)=\sum_{\overrightarrow{x}\in SAW_n}e^{\alpha |x_n|},
\ee
for those partition functions.
The following properties of these functions are derived in a standard fashion, making use of the fact that they are quite similar to partition functions
of polymer models, see {e.g} \cite{IV}. We provide a proof in Appendix for the convenience of the reader.
\begin{lem}\label{stand}
 Both functions $Z_{X_n}(\alpha)$, $X\in\{SP, SAW\}$, are strictly log-convex functions of $\alpha\in \R^+$ for each $n\geq 1$. For  $\alpha\geq 0$, one has
\be\label{free}
 \Lambda_X(\alpha)=\lim_{n\ra\infty} \frac{\ln Z_{X_n}(\alpha)}{n}=\inf_{n\geq 1}\frac{\ln Z_{X_n}(\alpha)}{n},
\ee
with  $\ln d\leq \Lambda_X(\alpha)\leq \alpha + \ln (2d)$. 
Moreover $\Lambda_X$ is convex,  non-decreasing and continuous on $[0,\infty)$, and the convergence is uniform on compact subsets of $[0,\infty)$.
\end{lem}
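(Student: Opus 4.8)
The plan is to treat both $Z_{SP_n}$ and $Z_{SAW_n}$ as partition functions of a polymer-type model and to extract the two structural facts that drive everything: log-convexity in $\alpha$ for fixed $n$, and sub-multiplicativity in $n$ for fixed $\alpha\ge 0$. Once these are in hand, the existence of $\Lambda_X$ as both a limit and an infimum is Fekete's subadditive lemma, and the remaining statements (the two-sided bound, convexity, monotonicity, continuity, and local uniform convergence) follow softly.

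First, log-convexity. For fixed $n$, $Z_{X_n}(\alpha)=\sum_{\vec x\in X_n}e^{\alpha|x_n|}$ is a finite sum of the log-affine functions $\alpha\mapsto e^{\alpha|x_n|}$. Writing $f=Z_{X_n}$, a direct computation gives $(\ln f)''=(f''f-(f')^2)/f^2$, and $(f')^2\le f\,f''$ is exactly Cauchy--Schwarz applied to $(e^{\alpha|x_n|/2})$ and $(|x_n|\,e^{\alpha|x_n|/2})$. Hence $\ln Z_{X_n}$ is convex on $\R$; equality in Cauchy--Schwarz forces all the exponents $|x_n|$ to coincide, so the convexity is strict as soon as the endpoint norms are not all equal over $X_n$. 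In particular $g_n:=\frac1n\ln Z_{X_n}$ is convex and, since each weight $e^{\alpha|x_n|}$ is non-decreasing for $\alpha\ge 0$, non-decreasing on $[0,\infty)$.

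Second, and this is the step I expect to be the crux, sub-multiplicativity: I claim $Z_{X_{n+m}}(\alpha)\le Z_{X_n}(\alpha)\,Z_{X_m}(\alpha)$ for $\alpha\ge 0$ and $X\in\{SP,SAW\}$. Split a path $\vec z\in X_{n+m}$ into its first $n$ steps $\vec x$ and its last $m$ steps $\vec y$ (the latter translated back to $e$); the map $\Phi:\vec z\mapsto(\vec x,\vec y)$ is injective, and since the displacement composes multiplicatively while the graph norm is subadditive, $|z_{n+m}|\le|x_n|+|y_m|$, whence $e^{\alpha|z_{n+m}|}\le e^{\alpha|x_n|}e^{\alpha|y_m|}$ for $\alpha\ge 0$. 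Summing over $\vec z$ and bounding by the full product sum gives the claim, provided $\Phi$ lands in $X_n\times X_m$. For $SAW$ this is immediate, as any contiguous sub-walk of a self-avoiding walk is self-avoiding. For $SP$ it is delicate and rests on two observations about the phase content of Definition~\ref{defpc}: (i) the endpoint of a path is determined by its phase content, by a flow-balance argument — it is the unique vertex other than $e$ at which the in- and out-degrees recorded by the multiset $\{m(y,z)\}$ fail to balance (and it is $e$ itself when they balance everywhere); and (ii) consequently, if the first part $\vec x$ were not in $SP_n$, there would be a distinct $\vec x'$ with $PC(\vec x')=PC(\vec x)$ and hence, by (i), the same endpoint, so that concatenating $\vec x'$ with $\vec y$ would produce a path $\vec z'\ne\vec z$ with $PC(\vec z')=PC(\vec z)$, contradicting $\vec z\in SP_{n+m}$; the same argument applies to $\vec y$. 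Thus $\Phi(X_{n+m})\subset X_n\times X_m$ in both cases. Taking logarithms, $a_n:=\ln Z_{X_n}(\alpha)$ is subadditive and bounded below by $n\ln d$ via the crude bound $\#SAW_n\ge d^n$ following (\ref{lb}), so Fekete's lemma yields (\ref{free}) with $\Lambda_X(\alpha)=\lim_n g_n(\alpha)=\inf_n g_n(\alpha)$.

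Finally, the regularity of $\Lambda_X$. The two-sided bound is elementary: $Z_{X_n}(\alpha)\ge Z_{X_n}(0)=\#X_n\ge d^n$ gives $g_n\ge\ln d$ for every $n$, hence $\Lambda_X\ge\ln d$, while $|x_n|\le n$ and $\#X_n\le(2d)^n$ give $Z_{X_n}(\alpha)\le(2d)^n e^{\alpha n}$, hence $\Lambda_X(\alpha)\le\alpha+\ln(2d)$. Convexity and monotonicity of $\Lambda_X$ pass to the pointwise limit from the convex, non-decreasing $g_n$. For continuity, $\Lambda_X$ is finite and convex on $[0,\infty)$, hence continuous on $(0,\infty)$; right-continuity at $0$ follows by sandwiching, since monotonicity gives $\Lambda_X(\alpha)\ge\Lambda_X(0)$ while convexity gives $\Lambda_X(\alpha)\le(1-\alpha)\Lambda_X(0)+\alpha\,\Lambda_X(1)\to\Lambda_X(0)$ as $\alpha\to0^+$. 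Local uniform convergence is then the standard fact that a pointwise-convergent sequence of convex functions converges uniformly on compact subsets of the interior, and monotonicity of the $g_n$ together with continuity of $\Lambda_X$ at $0$ upgrades this to uniform convergence on compact subsets of $[0,\infty)$. The only genuinely non-routine ingredient is the sub-multiplicativity for the $SP$ class, namely establishing (i) and (ii) above; everything else is bookkeeping.
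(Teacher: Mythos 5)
Your proof is correct and follows essentially the same route as the paper's appendix: log-convexity via the second-derivative computation (your Cauchy--Schwarz bound is the paper's symmetrised square identity), sub-multiplicativity of $Z_{X_n}(\alpha)$ plus Fekete's lemma for (\ref{free}), the crude counting bounds for $\ln d\leq \Lambda_X(\alpha)\leq \alpha+\ln(2d)$, and the same convexity sandwich at $0$ together with the monotone-functions-with-continuous-limit upgrade for continuity and uniform convergence on compacts of $[0,\infty)$. If anything you are more complete than the paper at the one delicate point: the appendix simply asserts $Z_{X_{n+m}}(\alpha)\leq Z_{X_n}(\alpha)Z_{X_m}(\alpha)$ for $X=SP$ without checking that both halves of a walk whose phase-content class is a singleton are themselves in $SP$, whereas your flow-balance observation (the phase content determines the endpoint, so an equivalent variant of either half could be swapped in to produce a distinct path with the same phase content) supplies exactly the missing hereditary property, and your caveat that strict log-convexity requires the endpoint norms not all to coincide is likewise more precise than the paper's proof, which only establishes $\Lambda_{X_n}''\geq 0$ (indeed, by (\ref{sawtree}), $Z_{SAW_n}$ on $\cT_{2d}$ is log-affine).
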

\begin{rems}\label{lbza}
i) One gets therefore
$
Z_{X_n}(\alpha)\geq e^{n\Lambda_X(\alpha)},
$
and 
\be\label{alphac}
\lim_{n\ra\infty}S_n(\alpha)=\infty \ \ \forall \alpha>\alpha_c\geq 0, \ \mbox{where } \  \Lambda_X(\alpha_c)=\ln (2d).
\ee
ii) The {\em connective constant} of the paths in the set $X$ is given by 
\be\label{connect}
\lim_{n\ra\infty} 
\Big(\sum_{\overrightarrow{x}\in X_n}1\Big)^{1/n}
= e^{\Lambda_{X}(0)}\in [d,2d].
\ee The connective constant on $\Z^d$ for the set $SP$, respectively $SAW$, will be denoted by $s(d)$, respectively $\mu(d)$.
\end{rems}

Let us introduce two auxiliary quantities, the 2-point function and the susceptibility, that appear in the analysis of partition functions  in the statistical mechanics of polymers.\\

Considering paths in $X_n$, with $X=SP$ or $X=SAW$ again, the 2-point function $G_\alpha(z,x)$  is defined in our setup  for $z\geq 0$ and $\alpha\geq 0$ by
\be\label{2pf}
G_\alpha(z,x)= \sum_{n\geq 0}z^n\sum_{{\overrightarrow{x}\in X_n}\atop {x_n=x}}e^{\alpha |x|}
\ee
and the the susceptibility $\chi_\alpha(z)$ is given by
\be\label{susc}
\chi_\alpha(z)=\sum_{n\geq 0}z^n\sum_{{\overrightarrow{x}\in X_n}}e^{\alpha |x_n|}=\sum_{x\in \G^d}G_\alpha(z,x).
\ee
Note that $|x_n|\leq n$ implies that the sum over $x\in \G^d$ is finite. The foregoing and the root test show that the radius of convergence of
\be
\chi_\alpha(z)=\sum_{n\geq 0}z^{n}Z_{X_n}(\alpha)
\ee
is given by the continuous, decreasing log-concave function of $\alpha$
\be\label{zcalpha}
z_c(\alpha)=\frac{1}{e^{\Lambda_X(\alpha)}}>0.
\ee
Given the interpretation of $\chi_\alpha(z)$, this characterises $z_c(\alpha)$ as a critical point.
Moreover, 
\be
\lim_{z\ra z^-_c(\alpha)}\chi_\alpha(z)=\infty,
\ee
Indeed, we get from Remark \ref{lbza} that for $0\leq z<z_c(\alpha)$
\be
\chi_\alpha(z)\geq \sum_{n\geq 0}z^{n}e^{n\Lambda_{X}(\alpha)} = \frac{1}{1-ze^{\Lambda_{X}(\alpha)}},
\ee
where the lower bound tends to infinity as $z\ra z^-_c(\alpha)$. Hence, we can relate the behaviour of $Z_{X_n}(\alpha)/(2d)^n$ for $n$ large to that of $\chi_\alpha(z/(2d))$ for $z$ close to $z_c(\alpha)$:
\begin{lem}\label{critacrit}
Let $\alpha_c$ be defined by $\Lambda_X(\alpha_c)=\ln(2d)$, see (\ref{alphac}). We have the characterisation
\be
\alpha < \alpha_c \ \Leftrightarrow \  z_c(\alpha) > 1/(2d) \ \Leftrightarrow \ \chi_\alpha(1/(2d))<\infty \ \Leftrightarrow \ \lim_{n\ra\infty}Z_{X_n}^{1/n}(\alpha)/2d <1.
\ee
\end{lem}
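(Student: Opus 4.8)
The plan is to read all four conditions through the single quantity $\Lambda_X(\alpha)$, using the relation $z_c(\alpha)=e^{-\Lambda_X(\alpha)}$ together with the identity $\lim_{n\ra\infty}Z_{X_n}^{1/n}(\alpha)=e^{\Lambda_X(\alpha)}$ supplied by Lemma \ref{stand}, and to show that the hub condition $z_c(\alpha)>1/(2d)$ is equivalent to each of the other three. The equivalence with the last condition is then immediate: $\lim_{n\ra\infty}Z_{X_n}^{1/n}(\alpha)/(2d)<1$ means $e^{\Lambda_X(\alpha)}<2d$, i.e. $e^{-\Lambda_X(\alpha)}=z_c(\alpha)>1/(2d)$.

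For the susceptibility I would exploit that $z_c(\alpha)$ is exactly the radius of convergence of the power series $\chi_\alpha(z)=\sum_{n\geq 0}z^nZ_{X_n}(\alpha)$. If $1/(2d)<z_c(\alpha)$, the point $1/(2d)$ lies strictly inside the disc of convergence, so $\chi_\alpha(1/(2d))<\infty$. Conversely, if $1/(2d)\geq z_c(\alpha)$, the lower bound $Z_{X_n}(\alpha)\geq e^{n\Lambda_X(\alpha)}=z_c(\alpha)^{-n}$ from Remark \ref{lbza} gives $(2d)^{-n}Z_{X_n}(\alpha)\geq (2d\, z_c(\alpha))^{-n}\geq 1$, so the general term of $\chi_\alpha(1/(2d))$ does not tend to $0$ and the series diverges. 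This single estimate disposes of the interior, the exterior, and---most importantly---the critical case $1/(2d)=z_c(\alpha)$, and yields $\chi_\alpha(1/(2d))<\infty\Leftrightarrow z_c(\alpha)>1/(2d)$.

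It remains to match $z_c(\alpha)>1/(2d)$ with $\alpha<\alpha_c$. Taking logarithms, the former reads $\Lambda_X(\alpha)<\ln(2d)=\Lambda_X(\alpha_c)$, so the task is to show $\Lambda_X(\alpha)<\Lambda_X(\alpha_c)\Leftrightarrow\alpha<\alpha_c$. Since $\Lambda_X$ is continuous and non-decreasing (Lemma \ref{stand}), monotonicity gives $\alpha\geq\alpha_c\Rightarrow\Lambda_X(\alpha)\geq\ln(2d)$ at once; the reverse implication is where the argument has teeth and is the main obstacle, as it requires that $\Lambda_X$ be strictly increasing through $\alpha_c$ (equivalently, that $\alpha_c$ be the unique solution of $\Lambda_X(\alpha_c)=\ln(2d)$). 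I would secure this from the convexity of $\Lambda_X$ combined with the lower bound $\Lambda_X(\alpha)\geq\alpha$, which follows from the $2d$ straight self-avoiding walks of length $n$ with $|x_n|=n$ giving $Z_{X_n}(\alpha)\geq 2d\,e^{\alpha n}$: a convex function with ultimately positive slope can be constant only on an initial interval, there at the value $\Lambda_X(0)$, and since the connective constant $e^{\Lambda_X(0)}$ of $SP$ and of $SAW$ is strictly below $2d$, this value is $<\ln(2d)$. Hence the level $\ln(2d)$ is reached only in the strictly increasing regime, $\alpha_c$ is unique, and $\Lambda_X(\alpha)<\ln(2d)\Leftrightarrow\alpha<\alpha_c$ follows, closing the chain of equivalences.
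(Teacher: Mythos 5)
Your proof is correct, and it takes essentially the route the paper intends: Lemma \ref{critacrit} is stated there without a separate proof, as a direct consequence of the identification $z_c(\alpha)=e^{-\Lambda_X(\alpha)}$ of the radius of convergence of $\chi_\alpha$, the lower bound $Z_{X_n}(\alpha)\geq e^{n\Lambda_X(\alpha)}$ of Remark \ref{lbza}, and the monotonicity, convexity and continuity statements of Lemma \ref{stand}. Your treatment of the susceptibility, in particular of the critical case $1/(2d)=z_c(\alpha)$ via the non-vanishing general term, is exactly the intended mechanism (it is the same estimate the paper uses to show $\chi_\alpha(z)\ra\infty$ as $z\uparrow z_c(\alpha)$), and the equivalence with $\lim_{n\ra\infty}Z_{X_n}^{1/n}(\alpha)/2d<1$ is indeed just (\ref{free}). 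Your explicit discussion of the uniqueness of $\alpha_c$, which the paper leaves tacit, is the right point to worry about.

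There is, however, one step you assert without justification: the strict inequality $e^{\Lambda_X(0)}<2d$, which your convexity argument genuinely needs (if $\Lambda_X(0)=\ln(2d)$, the level $\ln(2d)$ could be attained on the flat initial interval and $\alpha_c$ would not be unique). The paper's Remark \ref{lbza} only records $e^{\Lambda_X(0)}\in[d,2d]$. For $X=SAW$ strictness is immediate, since self-avoiding walks cannot backtrack, so $\#SAW_n\leq 2d(2d-1)^{n-1}$ and $e^{\Lambda_{SAW}(0)}\leq 2d-1$. For $X=SP$ it requires an argument, which fortunately is one line given the infimum representation in (\ref{free}): any path whose step sequence contains the factor $\tau\tau^{-1}\sigma\sigma^{-1}$ with $\sigma\neq\tau$ (two successive back-and-forth excursions along distinct edges at the same vertex) is not in $SP$, because exchanging the two excursions yields a distinct path with the same phase content; hence $\#SP_4<(2d)^4$ and, by subadditivity, $\Lambda_{SP}(0)=\inf_{n\geq1}\frac1n\ln\#SP_n\leq\frac14\ln\#SP_4<\ln(2d)$. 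With that line added, your chain of equivalences closes completely.
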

\subsubsection{SAW on $\cT_{2d}$}
The self avoiding walks on the tree are not numerous so that we can compute explicitly the partition function $Z_{SAW_n}(\alpha)$ and the quantities that derive from it. Any self avoiding path of length $n$ in $\cT_{2d}$ , $\overrightarrow{x}\in SAW_n$, is such that $|x_n|=n$, and there are $2d(2d-1)^{n-1}$ such paths. Hence
\be\label{sawtree}
Z_{SAW_n}(\alpha)=\frac{2d}{2d-1}((2d-1)e^\alpha)^{n}, \ \ \mbox{and } \ \ \Lambda_{SAW}(\alpha)=\ln (2d-1)+\alpha, \ \ \mbox{on} \ \cT_{2d}.
\ee
The function $\Lambda_{SAW}$ is affine, hence convex, and we get a value of $\alpha_c$ according to (\ref{alphac}), and a lower bound on the localisation length which read 
\be\label{sawbound}
\alpha_c=\ln\left(\frac{2d}{2d-1}\right)=\frac{1}{2d}+O(d^{-2})  \ \Rightarrow \  \cL\geq 2d+O(1), \ \ \mbox{when } \ d\ra \infty, \ \ \mbox{on} \ \cT_{2d}.
\ee
Since for any given $x\in\cT_{2d}$, there exists a unique SAW of length $n=|x|$ from $e$ to $x$, we have the following explicit expressions for $G_\alpha(z,x)$ and $\chi_\alpha(z)$
\begin{align}
&G_\alpha(z,x)= (ze^{\alpha})^{|x|}, \ \  \forall \ z\in \R  \\
&\chi_\alpha(z)= \frac{2d}{(2d-1)(1-z(2d-1)e^{\alpha})}, \ \ \forall \ 0<z<e^{-\alpha}/(2d-1),
\end{align}
which lead to the same conclusion about $\alpha_c$.\\
We improve the bound (\ref{sawbound}) in the next Section, discussing the susceptibility $\chi_\alpha(z)$ instead of the partition function.

\subsection{Lower Bound on $S_n(\alpha)$ on $\cT_{2d}$ via $Z_{SP_n}(\alpha)$ }\label{improvedtree}
We estimate $S_n(\alpha)$ by the sum $Z_{SP_n}(\alpha)=\sum_{\overrightarrow{x}\in SP_n}e^{\alpha |x_n|}$ on $\cT_{2d}$, taking into account a larger subset of paths in $SP_n$ than $SAW_n$, which allows to improve the lower bound on $\cL$.
\begin{thm}\label{locletree}
Let $U_\omega$ be a balanced quantum walk  on $\cT_{2d}$, with $d\geq 2$. Then\\
$
\limsup_{n\ra\infty}\E(\|e^{\alpha |X|/2}U_\omega^n \ e\otimes\tau_0\|^2)=\infty 
$
if 
\be
\alpha\geq \ln\left(\frac{1-1/(2d)+1/(2d)^2}{(1-1/(2d))(1-1/(2d)^2)}\right)=\frac{1}{2d^2}+O\left(\frac{1}{d^3}\right) \ \mbox{as} \ d\ra\infty, \nonumber
\ee
so that the localisation length $\cL$ satisfies 
\be
\cL > 2d^2 +O\left(d\right), \ \mbox{as} \ d\ra\infty.
\ee
\end{thm}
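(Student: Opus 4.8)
The plan is to feed a concrete, computable subfamily of rigid paths into the machinery of Section \ref{btoc}. Since $S_n(\alpha)\ge (2d)^{-n}Z_{SP_n}(\alpha)$ by (\ref{lb}), and since by Lemma \ref{critacrit} together with (\ref{alphac}) the threshold is pinned down by $z_c(\alpha_c)=1/(2d)$, it suffices to exhibit a subset $SP'\subseteq SP$ of paths on $\cT_{2d}$, compute its susceptibility $\chi_\alpha(z)$, and locate the $\alpha$ at which its radius of convergence drops to $1/(2d)$. Because restricting to $SP'$ only lowers $\Lambda_{SP}$, hence only raises the resulting $\alpha_c$, the conclusion $\cL\ge 1/\alpha_c$ obtained this way is legitimate; enlarging $SAW_n$ to $SP'$ is precisely what improves (\ref{sawbound}).

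First I would specify the subfamily. Fix an endpoint $x$ at distance $L$ from $e$ and call its geodesic $e=v_0,v_1,\dots,v_L=x$ the \emph{spine}. A path of $SP'$ consists of this spine together with, at each $v_i$, \emph{at most one} simple out-and-back excursion (a \emph{spur}) entering a child of $v_i$ distinct from $v_{i-1}$ and $v_{i+1}$, each spur being built recursively by the same rule (descend along a single branch and return). The step I expect to be the genuine obstacle is proving $SP'\subseteq SP$, i.e. that each such decorated walk is the unique element of its equivalence class. Two features must be checked: the spine edges are exactly the oriented edges carrying \emph{unbalanced} multiplicity (crossed once outward, never inward), so the phase content already fixes $x$ and its geodesic; while the spur edges carry balanced multiplicity and, since each vertex launches at most one excursion and each spur is a single branch, the traversal is forced (a spur must be completed before the non-returning spine step, and a single branch admits only the out-then-back order). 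Allowing two excursions at a vertex, or two distinct excursions into one child, would reinstate exactly the transposition that makes the class of Figure 1 nontrivial; ruling this out is what keeps $SP'$ inside $SP$.

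Granting rigidity, the susceptibility follows from the self-similarity of $\cT_{2d}$. A rigid excursion into a fixed child subtree is governed by $E(z)=1/(1-(2d-1)z^2)$ (descend one of $2d-1$ branches, weight $z^2$ per level, recurse); the decoration at an interior spine vertex is $F(z)=1+(2d-2)z^2E(z)$, since only $2d-2$ children are fresh; and summing $e^{\alpha L}$ times the number $2d(2d-1)^{L-1}$ of depth-$L$ endpoints produces a geometric series in $L$ with ratio $(2d-1)F(z)e^{\alpha}z$. Hence $\chi_\alpha(z)$ first diverges at the pole $z_c(\alpha)$ fixed by
\[
(2d-1)\,F(z_c)\,e^{\alpha}\,z_c=1,
\]
which lies below the singularity $z=(2d-1)^{-1/2}$ of $E$ in the regime of interest.

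Finally I would solve $z_c(\alpha_c)=1/(2d)$. Writing $u=1/(2d)$ one has $(2d-1)u=1-u$, $E(u)=1/(1-u+u^2)$ and $F(u)=(1-u^2)/(1-u+u^2)$, whence
\[
e^{\alpha_c}=\frac{1}{(1-u)\,F(u)}=\frac{1-u+u^2}{(1-u)(1-u^2)},
\]
which is exactly the displayed threshold. A Taylor expansion gives $\alpha_c=2u^2+O(u^3)=\tfrac{1}{2d^2}+O(d^{-3})$, and by Lemma \ref{critacrit} with Remark \ref{lbza} this yields $S_n(\alpha)\to\infty$ for every $\alpha>\alpha_c$, hence $\alpha_s\le\alpha_c$ and $\cL=1/\alpha_s\ge 1/\alpha_c=2d^2+O(d)$. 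The boundary case $\alpha=\alpha_c$ is covered because the inclusion $SP'\subsetneq SP$ is strict, so the true critical value of $\Lambda_{SP}$ is attained strictly below $\alpha_c$; the only quantitatively delicate point remains the rigidity verification of the second paragraph, the remaining work being the routine generating-function bookkeeping above.
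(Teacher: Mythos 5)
Your proposal is correct and is essentially the paper's own proof: the same family of decorated paths (a SAW spine carrying at most one back-and-forth SAW spur per site, with $2(d-1)$ fresh choices for the spur's first step), the same passage from $Z_{SP_n}(\alpha)$ to the susceptibility, and a pole condition $(2d-1)\,e^{\alpha}z\,F(z)=1$ that is algebraically identical to the paper's divergence criterion (\ref{divtree}) once the binomial/composition sums are performed, evaluated at $z=1/(2d)$ to give exactly the stated threshold $\frac{1-u+u^2}{(1-u)(1-u^2)}$ with $u=1/(2d)$. The differences are cosmetic---your generating functions $E,F$ and the geometric series in the spine length $L$ replace the paper's explicit sums over the number $j$ of decorated sites and the compositions of the total spur length $\rho$, and you usefully spell out the rigidity $SP'\subseteq SP$ that the paper asserts ``by construction''---though your closing aside that the strict inclusion $SP'\subsetneq SP$ forces a strictly smaller critical value is not a valid inference; it is also unnecessary, since at equality the relevant geometric series (ratio $1$) still diverges, which is precisely how the paper's proof covers $\alpha$ equal to the threshold.
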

\begin{rem}
This result shows that dynamical localisation of a BRQW on a symmetric tree can only hold for larger and larger localisation length for higher and higher coordination number. Again, BRQWs are far from the RQWs known to displaying delocalisation.
\end{rem}
\begin{proof}
Let $\overrightarrow{x_k}$ be one of the $2d(2d-1)^{k-1}$ SAW of length $k\geq 1$. We shall attach at $1\leq j\leq k+1$ distinct sites of 
$\overrightarrow{x_k}$, decorations of variable lengths $1\leq r_i$, $i\in\{1, 2, \dots, j\}$ consisting of SAW of length $r_i$ that do not intersect $\overrightarrow{x_k}$. These decorations  will be run back and forth in sequence,  so that such paths belong to SP by construction. 
Given $r_i\geq 1$, there are $2(d-1)(2d-1)^{r_i-1}$ ways to construct these decorations, the first factor taking into account the fact they need to avoid the ingoing and outgoing directions of $\overrightarrow{x_k}$ at the site where they are attached. We neglect the extra possibilities at both ends of $\overrightarrow{x_k}$. There are $\begin{pmatrix} k+1 \cr j \end{pmatrix}$ ways to choose the $j$ sites of $\overrightarrow{x_k}$ and, given $\rho\geq 1$, there are 
$\begin{pmatrix} \rho-1 \cr j-1 \end{pmatrix}$ compositions of $\rho$  with $j$ parts, {\it i.e.} distinct ordered sets $\{r_1,r_2,\dots , r_j\}$ such that $\rho=r_1+r_2+\dots + r_j$, with $1\leq r_i\leq \rho$. The total length of such a decorated path is thus $n=k+2\rho$, so that their contribution to $ Z_{SP_n}(\alpha)$ reads 
\be
 Z_{SP_n}(\alpha) \geq \frac{2d}{2d-1}\sum_{{{\rho\geq 1, k\geq 1}\atop{1\leq j\leq k+1}}\atop{\mbox{\tiny s.t.}\ n=k+2\rho}}
(e^\alpha(2d-1))^k\left(\frac{2(d-1)}{(2d-1)}\right)^j(2d-1)^\rho
\begin{pmatrix} k+1 \cr j \end{pmatrix}\begin{pmatrix} \rho-1 \cr j-1 \end{pmatrix}.
\ee
Note that the summand is non negative, and that the constraints on $j$ and $\rho$ can be taken care of  by the  binomial coefficients. 
At this point it is convenient to consider the susceptibility $\chi_\alpha(z)$, see (\ref{susc}), which thus admits the lower bound  for $z\geq 0$
\be
\chi_\alpha(z)\geq c(d) \sum_{n\geq 0}  z^{n}\sum_{\rho\geq 1, k\geq 1, j\geq 1}  \delta_{n,k+2\rho}\
(e^\alpha(2d-1))^k\left(\frac{2(d-1)}{(2d-1)}\right)^j(2d-1)^\rho
\begin{pmatrix} k+1 \cr j \end{pmatrix}\begin{pmatrix} \rho-1 \cr j-1 \end{pmatrix},
\ee
where $c(d)$ is an inessential constant that can change from line to line. Performing the sums of $k$ and $\rho$ first, making use twice of the identity 
\be
\sum_{k\geq j}\begin{pmatrix} k \cr j \end{pmatrix} x^k=\frac{x^j}{(1-x)^{j+1}}, \ \mbox{if} \ |x|<1, 
\ee
we get
\be
\chi_\alpha(z)\geq c(d) \sum_{j\geq 1}
\left(\frac{ze^\alpha(2d-1)}{1-ze^\alpha(2d-1)}\right)^j\left(\frac{2(d-1)}{(2d-1)}\right)^j\left(\frac{z^2(2d-1)}{1-z^2(2d-1)}\right)^j,
\ee
provided 
\be\label{condza}
z^2(2d-1)<1, \ \mbox{and} \ ze^\alpha(2d-1)<1.
\ee 
Divergence of $\chi_\alpha(z)$ thus holds if
\begin{align}\label{divtree}
\frac{z^3e^\alpha (2d-1)2(d-1)}{(1-ze^\alpha(2d-1))(1-z^2(2d-1))}\geq 1 \ \Leftrightarrow e^\alpha\geq \frac{1-z^2(2d-1)}{(1-z^2)z(2d-1)}.
\end{align}
One checks that the second estimate above is compatible with the second condition (\ref{condza}) for all values of $z > 0$ and $d>1$, whereas the first condition holds in particular for $z=1/(2d)$ for any $d\geq 1$. Thus, in the limit of large $d$, (\ref{divtree}) implies divergence of $\chi_\alpha(1/(2d))$ for
\be
e^\alpha\geq 1+1/(2d^2)+O(1/d^3),
\ee
hence the announced upper bound on $\alpha_c$.
\ep
\end{proof}
\begin{rem}
Actually, considering paths with decorations of length 1 only is enough to improve the lower bound on $\cL$ (\ref{sawbound}) to a constant times $d^2$, whereas paths with decorations of fixed length strictly larger than one do not provide such an improvement. 
\end{rem}

\subsection{Lower Bound on $S_n(\alpha)$ on $\Z^{d}$}\label{lbszd}

We now turn to the cubic lattice case in dimension larger than or equal to two. We have the choice in the norm we use in (\ref{laplace}), and the critical value $\alpha_c$ beyond which $\limsup_{n\ra\infty}\E(\|e^{\alpha |X|/2}U_\omega^n \ e\otimes\tau_0\|^2)=\infty$ depends on this choice. 

If $|\cdot|_{a}\leq \nu |\cdot|_{b}$, and $\alpha_c(a), \alpha_c(b)$ denote the corresponding critical values, we have
\be
\alpha_c(b)\leq \nu \alpha_c(a).
\ee
This is a consequence of the fact that $\alpha \mapsto \E(\|e^{\alpha |X|/2}U_\omega^n \ e\otimes\tau_0\|^2)$ is increasing.
Thanks to the relations between $l^p$ norms on $\Z^d$, 
\be
|x|_p\leq |x|_q\leq d^{1/q-1/p}|x|_p, \ \ \mbox{for any $1\leq q<p\leq \infty$, and any $x\in\Z^d,$}
\ee  
a critical value $\alpha_c(\infty)$ obtained for $p=\infty$ will do for all $p\in [1,\infty[$, while an estimate for $\alpha_c(1)$ is easier to get. 

We investigate here the dependence on dimension of the quantities involved so far, in order to get a bound on $\alpha_c(1)$.
In the sequel, it will sometimes be useful to write $x=(L,y)\in \Z^d$, with $L\in\Z$ and $y\in \Z^{d-1}$, so that $|(L,y)|_1=|L|+|y|_1$.
Dropping the subscript SAW and emphasising the dependence on the dimension in the notation, the partition function (\ref{partfun}) writes 
\be
Z^{(d)}_n(\alpha)=\sum_{L\in\Z \atop y\in \Z^{d-1}}\sum_{{\overrightarrow{x}\in SAW_n}\atop x_n=(L,y)}e^{\alpha |(L,y)|_1}=
\sum_{L\in\Z \atop y\in \Z^{d-1}}\sum_{{\overrightarrow{x}\in SAW_n}\atop x_n=(L,y)}e^{\alpha (|L|+|y|_1)}.
\ee
For $L\neq 0$ fixed, we can further restrict summation to SAWs $\overrightarrow{x}$ in $\Z^d$ of length $n$ from the origin to $(L,y)$ that stem from a SAW $\overrightarrow{y}$ in $\Z^{d-1}$ of length $n-|L|$ from the origin to $y$ in the following way:\\
For $r=1, \dots, n-|L|$, one attaches at $r$ of the $n-|L|+1$ sites of $\overrightarrow{y}$ a straight segment along the first coordinate axis, of length $n_i\geq 1$, $i=1,\dots, r$, such that $\sum_{i=1}^r n_i=|L|$.  We observe that given $|L|\geq 1$, and $1\leq r\leq |L|$ there are $\begin{pmatrix}|L|-1 \\ r-1\end{pmatrix}$ compositions of $|L|$ with $r$ parts, {\it i.e.} ways to write $|L|=n_1+n_2+\cdots +n_r$, where $(n_1,\cdots, n_r)$ is an ordered set, with $n_j\geq 1$. 
Since the number of ways to choose the $r$ locations on $\overrightarrow{y}$ is $\binom{n-|L|+1}{r}$, Vandermonde's identity 
\be
\sum_{k=0}^n\begin{pmatrix}p \\ k\end{pmatrix}\begin{pmatrix}q \\ n-k\end{pmatrix}=\begin{pmatrix}p+q \\ n\end{pmatrix},
\ee
shows that, given a SAW $\overrightarrow{y}$ in $\Z^{d-1}$, there are $\binom{n}{|L|}$ SAWs in $\Z^d$ from the origin to $(L,y)$ constructed this way. If $L=0$, we just consider $\overrightarrow{y}$ in $\Z^{d-1}$ as a SAW in  $\Z^{d}$.
Hence, 
restricting attention to this subset of SAWs, we deduce
\be
\sum_{{\overrightarrow{x}\in SAW_n \atop x_n=(L,y)}}1\geq \binom{n}{|L|}\sum_{{\overrightarrow{y}\in SAW_{n-|L|} \atop y_n=y}}1
\ee
so that
\be\label{esone}
Z^{(d)}_n(\alpha)\geq \sum_{{L\in\Z\atop y\in \Z^{d-1}}}e^{\alpha  |L|}e^{\alpha  |y|_1}\binom{n}{|L|}\sum_{{\overrightarrow{y}\in SAW_{n-|L|} \atop y_n=y}}1\equiv 
\sum_{{L\in\Z}}e^{\alpha  |L|}\binom{n}{|L|}Z^{(d-1)}_{n-|L|}(\alpha).
\ee 
This last estimate yields 
\begin{thm}\label{thmone}
For the $l^1$ norm on $\Z^d$, we have the estimate for any $d\in\N$,
\be
\alpha_c(1)\leq \ln(2).
\ee
\end{thm}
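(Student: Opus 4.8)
The plan is to show that at the special value $\alpha=\ln 2$ the self-avoiding susceptibility already diverges at the simple-random-walk point $z=1/(2d)$; by Lemma \ref{critacrit} this divergence is \emph{equivalent} to the assertion $\alpha_c(1)\leq \ln 2$. Writing $\Lambda^{(d)}$ for the free energy $\Lambda_{SAW}$ attached to the $l^1$ norm on $\Z^d$, it thus suffices to prove $\Lambda^{(d)}(\ln 2)\geq \ln(2d)$, i.e. $\limsup_{n\ra\infty} Z^{(d)}_n(\ln 2)^{1/n}\geq 2d$.

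The starting point is the dimensional recursion (\ref{esone}), specialised to $\alpha=\ln 2$ (so $e^{\alpha}=2$) and reorganised according to $\ell=|L|$, the value $\ell\geq 1$ occurring with multiplicity $2$ for the two signs of $L$:
\be
Z^{(d)}_n(\ln 2)\geq \sum_{\ell=0}^{n}\varepsilon_\ell\, 2^{\ell}\binom{n}{\ell}\,Z^{(d-1)}_{n-\ell}(\ln 2),\qquad \varepsilon_0=1,\ \varepsilon_\ell=2\ (\ell\geq 1).
\ee
The binomial weight is exactly what couples the dimensions, but it obstructs a clean factorisation of the ordinary generating function $\chi_\alpha$. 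The key idea is therefore to pass to the \emph{exponential} generating function $F^{(d)}(t)=\sum_{n\geq 0}Z^{(d)}_n(\ln 2)\,t^n/n!$, which is entire since $Z^{(d)}_n(\ln 2)\leq (2d)^n e^{\alpha n}$ trivially. Because $\tfrac1{n!}\binom{n}{\ell}=\tfrac1{\ell!\,(n-\ell)!}$, the binomial convolution linearises into an ordinary product,
\be
F^{(d)}(t)\ \geq\ \Big(1+2\sum_{\ell\geq 1}\frac{(2t)^{\ell}}{\ell!}\Big)F^{(d-1)}(t)=\big(2e^{2t}-1\big)\,F^{(d-1)}(t),\qquad t\geq 0.
\ee
Computing the base case directly — on $\Z$ there are exactly two SAWs of each length $n\geq 1$, both reaching distance $n$, so $Z^{(1)}_n(\ln 2)=2\cdot 2^{n}$ and $F^{(1)}(t)=2e^{2t}-1$ — and iterating the displayed inequality yields $F^{(d)}(t)\geq (2e^{2t}-1)^{d}\geq e^{2dt}$ for all $t\geq 0$.

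It then remains to read the exponential growth rate of the coefficients off this lower bound on $F^{(d)}$. This Tauberian step is the only delicate point, and it is the conceptual heart of the argument, but in the present one-sided form it is elementary. Suppose for contradiction that $L:=\limsup_{n\ra\infty}Z^{(d)}_n(\ln 2)^{1/n}<2d$; note $L=e^{\Lambda^{(d)}(\ln 2)}$ is a genuine limit by Lemma \ref{stand}. Choosing $\varepsilon>0$ with $L+\varepsilon<2d$ gives $Z^{(d)}_n(\ln 2)\leq C_\varepsilon (L+\varepsilon)^n$ for all $n$, whence $F^{(d)}(t)\leq C_\varepsilon\, e^{(L+\varepsilon)t}=o\!\big(e^{2dt}\big)$ as $t\ra\infty$, contradicting $F^{(d)}(t)\geq e^{2dt}$. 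Hence $\Lambda^{(d)}(\ln 2)\geq \ln(2d)$, and the conclusion $\alpha_c(1)\leq \ln 2$ follows from Lemma \ref{critacrit}. The main obstacle is thus not computational but structural: the dimensional recursion carries binomial weights that only the exponential generating function converts into a product one can iterate, after which extracting the growth rate is routine.
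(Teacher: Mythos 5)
Your proof is correct, and it shares the paper's combinatorial core --- the dimensional reduction estimate (\ref{esone}) and the criterion of Lemma \ref{critacrit} (equivalently (\ref{alphac})) --- but you exploit the recursion differently. The paper simply inserts the bound $Z^{(d-1)}_{n-|L|}(\alpha)\geq e^{(n-|L|)\Lambda^{(d-1)}(\alpha)}$, valid for \emph{every} $n$ because $\Lambda^{(d-1)}$ is an infimum by Lemma \ref{stand}, into (\ref{esone}) and recognises the binomial theorem:
\be
Z^{(d)}_n(\alpha)\ \geq\ \sum_{L\geq 0}e^{\alpha L}\binom{n}{L}e^{(n-L)\Lambda^{(d-1)}(\alpha)}=\bigl(e^{\alpha}+e^{\Lambda^{(d-1)}(\alpha)}\bigr)^n,
\ee
whence $\Lambda^{(d)}(\alpha)\geq \ln\bigl(e^{\alpha}+e^{\Lambda^{(d-1)}(\alpha)}\bigr)$ and, by induction from $\Lambda^{(1)}(\alpha)=\alpha$, the bound $\Lambda^{(d)}(\alpha)\geq \ln(d e^{\alpha})$ for \emph{all} $\alpha\geq 0$; evaluating the criterion then gives $\alpha_c(1)\leq \ln 2$. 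You instead keep $Z^{(d-1)}$ intact, linearise the binomial convolution via the exponential generating function, iterate $F^{(d)}(t)\geq (2e^{2t}-1)F^{(d-1)}(t)$ down to the exactly computable base case $F^{(1)}(t)=2e^{2t}-1$ (your count $Z^{(1)}_n(\ln 2)=2\cdot 2^n$ is right), and extract the coefficient growth rate by contradiction, using the existence of the limit guaranteed by Lemma \ref{stand}; every step checks out, including the term-by-term justification of the EGF product by nonnegativity. What each route buys: yours retains both signs of $L$ (hence $2e^{2t}-1$ rather than $e^{2t}$) and avoids manipulating the free energy at level $d-1$, but works only at the single point $\alpha=\ln 2$, and the extra prefactor $2^d$ in $(2e^{2t}-1)^d$ is invisible at the level of exponential growth rates, so the final bound is identical; the paper's route is shorter, needs no Tauberian step, and yields the stronger statement $\Lambda^{(d)}(\alpha)\geq \alpha+\ln d$ uniformly in $\alpha$. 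One quibble: your closing claim that \emph{only} the exponential generating function can convert the binomial weights into a product is overstated --- once $Z^{(d-1)}_{n-\ell}(\alpha)$ is replaced by its exponential lower bound, the ordinary binomial theorem already factorises the sum, which is exactly the paper's one-line shortcut.
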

\begin{proof}
From $Z^{(d)}_n(\alpha)\geq e^{n\Lambda^{(d)}(\alpha)}$, the right side of (\ref{esone}) is bounded below by $(e^\alpha+e^{\Lambda^{(d-1)}(\alpha)})^n$, so that, see (\ref{free}),
\be
\Lambda^{(d)}(\alpha)\geq \ln (e^\alpha+e^{\Lambda^{(d-1)}(\alpha)}).
\ee
Using the starting point $\Lambda^{(1)}(\alpha)=\alpha$, see (\ref{sawtree}), induction yields $\Lambda^{(d)}(\alpha)\geq \ln (d e^\alpha).$
With the criterion (\ref{alphac}), we get $\alpha_c(1)\leq \ln(2)$.
\end{proof}
\begin{rem}
For the $l^p$ norm with $\infty>p\geq 1$, using $|(L,y)|_p\geq \frac{|L|+|y|_p}{2^{1-1/p}}$, the same argument yields an upper bound on $\alpha_c(p)$ 
which, however, is not as good as $\ln(2)$.
\end{rem}
For the sake of comparison, it is worth mentioning that for the $l^\infty$ norm, the same strategy also provides an explicit bound. Indeed, using  $|(L,y)|_\infty=\max(|L|,|y|_\infty)$, we get
 \be
Z^{(d)}_n(\alpha)\geq \sum_{{y\in \Z^{d-1}\atop L\in\Z}\atop |L|\leq |y|_\infty}e^{\alpha  |y|_\infty}\binom{n}{|L|}\sum_{{\overrightarrow{y}\in SAW_{n-|L|} \atop y_n=y}}1+\sum_{{y\in \Z^{d-1}\atop L\in\Z}\atop  |L|> |y|_\infty}e^{|L|\alpha}\binom{n}{|L|}\sum_{{\overrightarrow{y}\in SAW_{n-|L|} \atop y_n=y}}1.
\ee
Now, using $e^{\alpha |y|_\infty}\geq e^{\alpha|L|}$ in the first sum yields 
\be
Z^{(d)}_n(\alpha)\geq \sum_{{y\in \Z^{d-1}\atop L\in\Z}}e^{\alpha  |L|}\binom{n}{|L|}\sum_{{\overrightarrow{y}\in SAW_{n-|L|} \atop y_n=y}}1\equiv 
\sum_{{y\in \Z^{d-1}\atop L\in\Z}}e^{\alpha  |L|}\binom{n}{|L|}Z^{(d-1)}_{n-|L|}(0).
\ee
Then, with  $Z^{(d-1)}_{n-|L|}(0)\geq \mu(d-1)^{n-|L|}$, see (\ref{connect}), we obtain
\be
Z^{(d)}_n(\alpha)\geq \left(e^\alpha + \mu(d-1)\right)^n \ \Rightarrow \ \Lambda^{(d)}(\alpha)\geq \ln (e^\alpha + \mu(d-1)).
\ee
Hence, together with the known asymptotics $\mu(d)=2d-1+O(1/d)$ in high dimension, see {\it e.g.} \cite{MS, BDGS}, we obtain
\be
\alpha_c(\infty)\leq \ln(2d-\mu(d-1))=\ln(3)+O(1/d).
\ee
Not only is this bound not as good as the one on $\alpha_c(1)$ for $d$ large, but, since $d\leq \mu(d) \leq 2d$, the method used here to estimate $\alpha_c(\infty)$ cannot yield a better estimate than $\ln(2)$, the bound for $\alpha_c(1)$.

\section{Localisation length and correlation length of SAW}\label{seccor}

In this closing section, we make the link between the quantities involved in the analysis of the localisation length of BRQWs and the correlation length appearing in the analysis of SAWs in $\Z^d$. Our concern being in the large $d$ limit, we assume $d\geq 5$ in this section. We recall the relevant notions and results to be found in \cite{MS, BDGS}, for example. 
The two point function for SAWs, for paths starting at the origin, is given by (\ref{2pf}) at $\alpha=0$, and similarly for the corresponding susceptibility
\begin{align}
G_0(z,x)=\sum_{n\geq 0}z^n\sum_{{\overrightarrow{x}\in SAW_n}\atop {x_n=x}}1,\ \ \mbox{and} \ \
\chi_0(z)=\sum_{x\in\Z^d}G_0(z,x).
\end{align}
If $x\neq 0$, both functions have radius of convergence $z_c(0)=1/\mu(d)$, for SAWs, see (\ref{zcalpha}).\\

The inverse correlation length or mass for SAWs in $\Z^d$ is defined for all $z>0$ by
\be
m_d(z)=\liminf_{L\ra \infty} \frac{-\ln G_0(z,(L,0))}{L}, 
\ee
where we stress the dependence on the dimension $d$ in the notation. 
We provide below alternative expressions of the inverse correlation length $m_d(z)=1/\xi_d(z)$.
The mass $m_d$ enjoys several properties proven in \cite{MS}, that we summarise in the next proposition:
\begin{prop}
For any $0<z<1/\mu(d)$,  $z\mapsto m_d(z)$ is real analytic, strictly decreasing, concave in $\ln z$, and 
\be
\lim_{z\downarrow  0}m_d(z)=\infty, \ \ \lim_{z\uparrow 1/\mu(d)}m_d(z)=0.
\ee 
For $z>1/\mu(d)$, $m_d(z)=-\infty$ and, if $d\geq 5$, $m_d(1/\mu(d))=0$.
\end{prop}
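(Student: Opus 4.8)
The plan is to assemble the statement from the standard self-avoiding walk literature, principally \cite{MS} (and the high-dimensional results surveyed in \cite{BDGS}), verifying the elementary parts directly and attributing the structural and critical parts to the relevant deep results. I would organise the argument by separating four regimes: the supercritical case $z>1/\mu(d)$, the small-$z$ limit, the subcritical structural properties, and the critical value at $z=1/\mu(d)$.

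First I would dispose of the two elementary claims. For $z>1/\mu(d)$: for each fixed $L\geq 1$ the quantity $G_0(z,(L,0))$ is a power series in $z$ with non-negative coefficients and radius of convergence $z_c(0)=1/\mu(d)$, see (\ref{zcalpha}) at $\alpha=0$, so the series diverges to $+\infty$ for $z>1/\mu(d)$; hence $-\ln G_0(z,(L,0))/L=-\infty$ for every $L$ and $m_d(z)=-\infty$. For the limit $z\downarrow 0$: the shortest SAW from the origin to $(L,0)$ has length exactly $L$, so $G_0(z,(L,0))=\sum_{n\geq L}z^n c_n$ with $c_L\geq 1$; factoring out $z^L$ gives $-\ln G_0(z,(L,0))/L=-\ln z-\ln(c_L+O(z))/L$, and letting first $L\to\infty$ and then $z\to 0$ yields $m_d(z)=-\ln z+o(1)\to+\infty$.

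Next I would treat the subcritical structural properties on $0<z<1/\mu(d)$. Writing $t=\ln z$ and $f_L(t)=-\ln G_0(e^t,(L,0))$, the representation $G_0(e^t,(L,0))=\sum_n e^{nt}c_n$ with $c_n\geq 0$ exhibits $G_0(e^t,(L,0))$ as log-convex in $t$ (the standard convexity of a cumulant generating function), so each $f_L/L$ is concave and non-increasing in $t$. The existence of the genuine limit $m_d=\lim_L f_L/L$, and not merely the $\liminf$, follows from a bridge/renewal concatenation argument giving supermultiplicativity of the associated two-point quantity together with Fekete's lemma, as in \cite{MS}; concavity and monotonicity in $t=\ln z$ then pass to the limit, since a pointwise limit of concave non-increasing functions is concave and non-increasing. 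Strict monotonicity and real analyticity on the open subcritical interval are the content of Ornstein--Zernike theory for SAWs, which I would cite directly from \cite{MS} rather than reconstruct.

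The hardest point, and the one I would flag as the main obstacle, is the critical behaviour, namely $m_d(1/\mu(d))=0$ for $d\geq 5$ together with $\lim_{z\uparrow 1/\mu(d)}m_d(z)=0$. The vanishing of the mass at criticality in high dimension is not elementary: it rests on the lace expansion establishing mean-field (Gaussian) behaviour $G_0(1/\mu(d),x)\sim C|x|^{2-d}$, whose power-law decay forces $m_d(1/\mu(d))=0$, and this is precisely where the restriction $d\geq 5$ enters. The boundary limit $\lim_{z\uparrow 1/\mu(d)}m_d(z)=0$ then follows by combining the continuity of $m_d$ up to the critical point with this boundary value. I would present this last part as a citation to \cite{MS} and the lace-expansion references therein, rather than reproving it here.
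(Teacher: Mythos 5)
Your proposal is correct and takes essentially the same route as the paper, which states this proposition without proof as a summary of results from \cite{MS}: your division of labour (direct verification of the elementary supercritical and small-$z$ claims, concavity in $\ln z$ from log-convexity of power series with non-negative coefficients, and citation of the bridge/Fekete, Ornstein--Zernike, and lace-expansion machinery of \cite{MS, BDGS} for existence of the limit, analyticity, and the critical behaviour at $d\geq 5$) is exactly the intended provenance. The only nitpick is that your small-$z$ computation actually yields $m_d(z)=-\ln z+O(1)$ rather than $-\ln z+o(1)$, since the coefficient of the correction term grows with $L$ (it is controlled by $c_n((L,0))\leq (2d)^n$, giving an error of order $\ln(2d)$), but this still suffices for $\lim_{z\downarrow 0}m_d(z)=\infty$.
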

In order to make the link with the localisation length $\cL$, we introduce for all $z>0$,
\begin{align}
G_L(z)=\sum_{y\in\Z^{d-1}}G_0(z, (L,y)),
\end{align}
the generating functions of SAWs from the origin to the plane $\{x_1=L\}$. 
This generating function is known to satisfy, among other things,
\begin{lem}\label{glz} For all $0<z<1/\mu(d)$, 
\begin{align}
&m_d(z)=\lim_{L\ra \infty} \frac{-\ln G_L(z)}{L}=\sup_{L\geq 1}\frac{-\ln G_L(z)}{L}, \ \ \mbox{and} \\
&e^{-m_d(z)L}\leq G_L(z)\leq \chi_0(z)^2e^{-m_d(z)L}, \ \ \mbox{if } \ L\geq 1.
\end{align}
\end{lem}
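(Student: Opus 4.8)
The plan is to derive both displayed assertions from a single submultiplicativity property of $G_L(z)$, supplemented by the standard comparison between the hyperplane two-point function and the straight single-point function $G_0(z,(L,0))$ available in \cite{MS}. Throughout $z\in(0,1/\mu(d))$ is fixed, so that $\chi_0(z)<\infty$, and all sums have nonnegative terms.

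\textbf{Submultiplicativity and the $\lim=\sup$ identity.} First I would prove that for all integers $L,M\ge 1$,
\be
G_{L+M}(z)\le G_L(z)\,G_M(z). \nonumber
\ee
Consider any SAW from the origin whose endpoint lies on $\{x_1=L+M\}$, and split it at the \emph{last} step at which its first coordinate equals $L$. Since a nearest-neighbour walk changes its first coordinate by at most one per step, every such walk does reach level $L$, and after this last visit its first coordinate stays strictly above $L$ (returning to $\le L$ would force a further visit to $L$). Hence the initial piece is a SAW ending on $\{x_1=L\}$ and the translated second piece is an upper-half-space SAW from the origin ending on $\{x_1=M\}$. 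This map is injective and its range sits inside the product of the two sets of walks, so summing $z^{|\overrightarrow{x}|}$ gives the inequality, the upper-half-space constraint only decreasing the second factor below $G_M(z)$. Consequently $a_L:=-\ln G_L(z)$ is superadditive, and the superadditive analogue of the Fekete argument behind (\ref{free}) yields
\be
\lim_{L\ra\infty}\frac{-\ln G_L(z)}{L}=\sup_{L\ge 1}\frac{-\ln G_L(z)}{L}=:\widetilde m_d(z), \nonumber
\ee
and in particular $G_L(z)\ge e^{-\widetilde m_d(z)L}$ for every $L\ge 1$.

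\textbf{The lower bound.} Keeping only the $y=0$ term in $G_L(z)=\sum_{y\in\Z^{d-1}}G_0(z,(L,y))$ gives $G_L(z)\ge G_0(z,(L,0))$, hence $\widetilde m_d(z)\le\liminf_{L}\tfrac{-\ln G_0(z,(L,0))}{L}=m_d(z)$. Since $\widetilde m_d(z)\le m_d(z)$ implies $e^{-\widetilde m_d(z)L}\ge e^{-m_d(z)L}$, the previous step already delivers the claimed lower bound $G_L(z)\ge e^{-m_d(z)L}$ with no further input.

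\textbf{The upper bound and the coincidence of rates (main obstacle).} It remains to show $G_L(z)\le\chi_0(z)^2e^{-m_d(z)L}$; granting this, $-\ln G_L(z)\ge m_d(z)L-2\ln\chi_0(z)$ forces $\widetilde m_d(z)\ge m_d(z)$, whence $\widetilde m_d(z)=m_d(z)$ and both identities of the lemma follow. The difficulty is that the full hyperplane sum could a priori decay strictly slower than the straight single-point function, so one must control the transverse spread of the walks. I would do this via the bridge/renewal decomposition of \cite{MS}: a walk contributing to $G_L(z)$ is cut into an initial portion, a middle slab-crossing (bridge) portion, and a final portion; summing the two end portions freely over their transverse displacements produces a factor $\chi_0(z)=\sum_{x\in\Z^d}G_0(z,x)$ each, while the middle portion carries precisely the single-point exponential rate $m_d(z)$. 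The technical heart — and the step I expect to be the real obstacle — is the identification of the bridge, hyperplane, and single-point correlation lengths, which is exactly the content of the correlation-length analysis in \cite{MS}; with it, the two factors $\chi_0(z)$ assemble into the stated constant and the proof is complete.
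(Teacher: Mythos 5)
The paper does not actually prove this lemma: it is introduced with ``This generating function is known to satisfy, among other things,'' and is quoted wholesale from the SAW literature \cite{MS}. Measured against that, your proposal supplies a genuine proof of the easy half. Your unfolding argument is correct: for nearest-neighbour walks the last visit to $\{x_1=L\}$ is well defined, the subsequent piece is confined to the open half-space $\{x_1>L\}$, both pieces are SAWs, and the cut-and-translate map is injective into the product, giving $G_{L+M}(z)\le G_L(z)G_M(z)$; since $0<z^L\le G_L(z)\le \chi_0(z)<\infty$ for $0<z<1/\mu(d)$, the superadditive Fekete lemma applied to $-\ln G_L(z)$ yields the existence of the limit $\widetilde m_d(z)=\lim_{L\ra\infty}(-\ln G_L(z))/L=\sup_{L\ge1}(-\ln G_L(z))/L$ together with $G_L(z)\ge e^{-\widetilde m_d(z)L}$, and the one-term bound $G_L(z)\ge G_0(z,(L,0))$ gives $\widetilde m_d(z)\le m_d(z)$, hence the stated lower bound. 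This is the standard mechanism behind the cited result, and it is more than the paper writes down.

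The genuine gap is exactly where you locate it: the upper bound $G_L(z)\le \chi_0(z)^2e^{-m_d(z)L}$, which simultaneously carries the only remaining content of the first display, namely $\widetilde m_d(z)= m_d(z)$. Your bridge/renewal paragraph gestures at the right mechanism but is not a proof: you never specify the three-way cut, and for SAWs any such scheme must handle the mutual avoidance of the pieces --- note the asymmetry with your easy direction, where cutting only required injectivity into a product, whereas here one needs a reverse comparison of the type $G_L(z)G_M(z)\le \chi_0(z)^2\,G_{L+M}(z)$ (approximate supermultiplicativity) or a comparison of the hyperplane rate with the bridge and point masses; gluing two SAWs can create intersections, so this requires a repair or half-space/bridge argument and is precisely the nontrivial part of the correlation-length analysis in \cite{MS}. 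Since the paper itself cites \cite{MS} for the entire lemma, deferring this step to the same reference is consistent with the paper's level of rigour, and you flagged the deferral honestly; but judged as a self-contained blind proof, the upper bound (and with it the identity $\widetilde m_d=m_d$) remains unproven.
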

This is enough to prove the sought for relationship between localisation length and correlation length:
\begin{prop}\label{corloc} On the cubic lattice $\Z^d$, the localisation length of BRQWs is bounded below by the correlation length of SAWs  at the critical value of simple random walks:
\be
\cL\geq \xi_d(1/(2d))  \ \Leftrightarrow \ \alpha_c\leq m_d(1/(2d)).
\ee
\end{prop}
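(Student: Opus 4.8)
The plan is to reduce the stated equivalence to the single inequality $\alpha_c \le m_d(1/(2d))$: the two displayed inequalities are reformulations of one another through $\xi_d = 1/m_d$ together with $\cL \ge 1/\alpha_c$, so it suffices to establish $\alpha_c \le m_d(1/(2d))$. By Lemma~\ref{critacrit} applied to $X = SAW$, one has $\chi_\alpha(1/(2d)) = \infty \Rightarrow \alpha \ge \alpha_c$. Hence it is enough to show that the SAW susceptibility diverges at the fugacity $z = 1/(2d)$ when the tilt is set to $\alpha = m_d(1/(2d))$, for this immediately yields $m_d(1/(2d)) \ge \alpha_c$.

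First I would write the susceptibility $(\ref{susc})$ as a sum over endpoints, $\chi_\alpha(z) = \sum_{x \in \Z^d} e^{\alpha|x|_1}\,G_0(z,x)$, which is legitimate since every summand is nonnegative for $\alpha, z \ge 0$; consequently discarding terms and rearranging can only decrease the sum. Splitting $x = (L,y)$ with $|x|_1 = |L| + |y|_1$, retaining only the half-space $L \ge 1$, and bounding $e^{\alpha|y|_1} \ge 1$ gives
\[
\chi_\alpha(1/(2d)) \ \ge\ \sum_{L \ge 1} e^{\alpha L} \sum_{y \in \Z^{d-1}} G_0(1/(2d),(L,y)) \ =\ \sum_{L \ge 1} e^{\alpha L}\, G_L(1/(2d)),
\]
where $G_L$ is exactly the plane generating function of Lemma~\ref{glz}. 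This step converts the full $l^1$ tilt into a one-directional tilt along the $x_1$ axis, which is the direction governing the correlation length.

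Next I would insert the lower bound $G_L(z) \ge e^{-m_d(z) L}$ of Lemma~\ref{glz}, valid for $L \ge 1$ and $0 < z < 1/\mu(d)$, to obtain $\chi_\alpha(1/(2d)) \ge \sum_{L \ge 1} e^{(\alpha - m_d(1/(2d)))L}$. This geometric series diverges as soon as $\alpha \ge m_d(1/(2d))$; choosing $\alpha = m_d(1/(2d))$ makes it $\sum_{L\ge 1} 1 = \infty$. Thus $\chi_{m_d(1/(2d))}(1/(2d)) = \infty$, and Lemma~\ref{critacrit} yields $m_d(1/(2d)) \ge \alpha_c$, as required.

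The main, and really the only, point needing care is that Lemma~\ref{glz} may be applied at the precise value $z = 1/(2d)$: its hypothesis requires $1/(2d) < 1/\mu(d)$, i.e.\ $\mu(d) < 2d$. This holds because a self-avoiding walk forbids immediate backtracking, so $\mu(d) \le 2d - 1 < 2d$ (in agreement with $e^{\Lambda_{SAW}(0)} \in [d,2d]$ from Remark~\ref{lbza}), whence $1/(2d) < 1/(2d-1) \le 1/\mu(d)$ and the bound is legitimate. One should also confirm that restricting the nonnegative endpoint sum to $L \ge 1$ and dropping the transverse weight loses nothing for a lower bound, and that the critical constant in Lemma~\ref{critacrit} is indeed the SAW value entering $m_d$.
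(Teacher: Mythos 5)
Your proposal is correct and follows essentially the same route as the paper: restrict the susceptibility at fugacity $1/(2d)$ to endpoints whose first coordinate satisfies $L\geq 1$, discard the transverse weight so that only $e^{\alpha L}$ survives, insert the lower bound $G_L(z)\geq e^{-m_d(z)L}$ of Lemma~\ref{glz}, and conclude divergence of the resulting geometric series for $\alpha\geq m_d(1/(2d))$, whence $\alpha_c\leq m_d(1/(2d))$ via Lemma~\ref{critacrit}. The only minor difference is that the paper tilts with the $l^\infty$ norm, using $|(L,y)|_\infty\geq |L|$, so that its bound on $\alpha_c$ covers all $l^p$ norms at once, whereas your $l^1$ tilt gives the same conclusion for $\alpha_c(1)$; your explicit verification that $\mu(d)\leq 2d-1<2d$, which legitimises applying Lemma~\ref{glz} at $z=1/(2d)$, is a point the paper leaves implicit.
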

\begin{rems}
i) If one can show that $\lim_{d\ra\infty} m_d(1/(2d))=0$, it would prove the divergence with the dimension of the localisation length $\cL$. 
Despite the fact that $1/(2d)$ is the critical point for simple random walks, to which $1/\mu(d)$ converges, the control of the limit seems to be non trivial according to experts in SAWs, and it is not clear that the limit vanishes.\\
ii) An uncontrolled prediction on $\lim_{d\ra \infty }\xi_d(1/(2d))$ is however not encouraging: for $d\geq 5$ fixed, it is known that $\xi_d(z)\simeq \{\frac{D}{2d}(\frac{z_c}{z_c-z})\}^{1/2}$, as $z\ra z_c=1/\mu(d)$, where $D$ is the diffusion constant of SAWs, such that $1+c_1/d\leq D \leq 1+ c_2/d$, for large $d$, with $c_1, c_2>0$, see Thm 6.1.5 and Proposition 6.2.11 in \cite{MS}. Ignoring the fact that error terms are {\em a priori} not uniform in $d$, and plugging in the asymptotics 
$\mu(d)=2d-1+O(1/d)$, we get $\xi_d(1/(2d))\simeq 1$.  However, Theorem \ref{thmone} yields the better bound $\cL\geq 1/\ln(2)$.
\end{rems}
\begin{proof}
Considering $\chi_\alpha(z)=\sum_{n\geq 0} z^{n}\sum_{\overrightarrow{x}\in SAW_n}e^{\alpha |x_n|_\infty}$ and paths with end points of the form $x_n=(L,y)$, we get with $|(L,y)|_\infty\geq |L|$,  
\be
\chi_\alpha\left(\frac{z}{2d}\right)=\sum_{n\geq 0}\left(\frac{z}{2d}\right)^{n}\sum_{L\in\Z \atop y\in \Z^{d-1}}e^{\alpha |(L,y)|_\infty}\sum_{\overrightarrow{x}\in SAW_n \atop x_n=(L,y)}1\geq \sum_{L\in\Z }e^{\alpha |L|}\sum_{y\in \Z^{d-1}}G_0\left(\frac{z}{2d},(L,y)\right).
\ee
The last sum above coincides with $G_L\left(\frac{z}{2d}\right)$, so that restricting to $L\geq1$, Lemma \ref{glz} yields the estimate
\be
\chi_\alpha\left(\frac{z}{2d}\right)\geq \sum_{L>0}e^{\alpha L}e^{-m_d\left(\frac{z}{2d}\right)L}.
\ee
The right hand side being convergent if and only if $\alpha<m_d\left(\frac{z}{2d}\right)$, we deduce the result from Lemma \ref{critacrit}.\ep
\end{proof}

\section{Appendix}

{\bf Proof of Lemma \ref{stand}:} \\
 For the log-convexity property we compute the second derivative of $\Lambda_{X_n}(\alpha)= \frac1n\ln Z_{X_n}(\alpha)$:
\be
\Lambda_{X_n}'(\alpha)= \frac{Z'_{X_n}(\alpha)}{nZ_{X_n}(\alpha)}, \ \ \ \ \Lambda_{X_n}''(\alpha)=\frac{Z_{X_n}(\alpha)Z''_{X_n}(\alpha)-(Z'_{X_n}(\alpha))^2}{nZ^2_{X_n}(\alpha)},
\ee
with $Z^{(p)}_{X_n}(\alpha)=\sum_{\overrightarrow{x}\in SP_n} |x_n|^pe^{\alpha |x_n|}$, for $p\in\N$. Hence $\Lambda_{X_n}'(\alpha)>0$ and 
\bea
Z_{X_n}(\alpha)Z''_{X_n}(\alpha)-(Z'_{X_n}(\alpha))^2&=&\sum_{ {\overrightarrow{x}\in X_n} \atop {\overrightarrow{y}\in X_n}} (|x_n|^2-|x_n||y_n|)e^{\alpha (|x_n|+|y_n|)}\nonumber\\
&=&\frac12\sum_{ {\overrightarrow{x}\in X_n} \atop {\overrightarrow{y}\in X_n}} (|x_n|-|y_n|)^2e^{\alpha(|x_n|+|y_n|)}\geq 0.
\eea
To prove the existence of the limit as $n\ra\infty$ of $\Lambda_{X_n}(\alpha)$, we use a subadditivity argument. 
For  $\overrightarrow{x}\in X_n$ and $\overrightarrow{y}\in X_m$, the concatenated path $\overrightarrow{xy}$ obtained by following $\overrightarrow{x}$ from $e$ to $x_n$ and then $\overrightarrow{y}$ from $x_n$ to $x_ny_m$ does not necessarily satisfy the requirements to belong to $X_{n+m}$. By the triangular inequality, $|x_ny_m|\leq |x_n|+|y_m|$, we have for $\alpha\geq 0$,
\bea
Z_{X_{n+m}}(\alpha)&=&\sum_{\overrightarrow{x}\in X_{n+m}}e^{\alpha |x_{n+m}|}\leq \sum_{ {\overrightarrow{x}\in X_n} \atop {\overrightarrow{y}\in X_m}}e^{\alpha |x_ny_{m}|}\nonumber\\
&=& \sum_{ {\overrightarrow{x}\in X_n} \atop {\overrightarrow{y}\in X_m}}e^{\alpha |x_n|}e^{\alpha |y_{m}|}=Z_{X_{n}}(\alpha)Z_{X_{m}}(\alpha).
\eea
Consequently, $\{\Lambda_{X_n}(\alpha)=\frac1{n}\ln(Z_{X_{n}}(\alpha))\}_{n\in\N^*}$ is a subadditive sequence, which implies 
(\ref{free}) for each $\alpha\geq 0$. The upper and lower bounds on $\Lambda_X(\alpha)$ are consequences of $S_n(\alpha)\leq e^{n\alpha}$ and (\ref{lb}).
Since $\Lambda_{X_n}$ is convex and increasing for all $n\in\N^*$, we immediately get that $\Lambda_{X}$ is convex and non decreasing on $[0,\infty)$. Moreover, $\Lambda_{X_n}$ converges uniformly to $\Lambda_{X}$ on any compact set of $]0,\infty[$, and $\Lambda_{X}$ is continuous on $]0,\infty[$, see \cite{S}. It remains to extend the result to compact sets of the form $[0,b]$, $b>0$. Let $0<\alpha<1$ and write $\alpha=0(1-\alpha)+1\alpha$. By convexity and monotony of $\Lambda_{X}$, we have
\be
0\leq \Lambda_{X}(\alpha)-\Lambda_{X}(0)\leq \alpha (\Lambda_{X}(1)-\Lambda_{X}(0)),
\ee
so that $\Lambda_{X}$ is continuous at $0$. Since a sequence of monotonous functions that converges to a continuous function on a compact set converges uniformly, this finishes the proof.
\ep

 {

}
\end{document}